\newtheorem{definition}{Definition}
\newtheorem{proposition}[definition]{Proposition}
\newtheorem{lemma}[definition]{Lemma}
\newtheorem{theorem}[definition]{Theorem}
\newtheorem{corollary}[definition]{Corollary}
\newtheorem{conjecture}[definition]{Conjecture}
\newtheorem{remark}[definition]{Remark}
\newtheorem{example}[definition]{Example}
\def\squareforqed{\hbox{\rlap{$\sqcap$}$\sqcup$}}
\def\qed{\ifmmode\squareforqed\else{\unskip\nobreak\hfil
\penalty50\hskip1em\null\nobreak\hfil\squareforqed
\parfillskip=0pt\finalhyphendemerits=0\endgraf}\fi}
\def\endenv{\ifmmode\;\else{\unskip\nobreak\hfil
\penalty50\hskip1em\null\nobreak\hfil\;
\parfillskip=0pt\finalhyphendemerits=0\endgraf}\fi}
\newenvironment{proof}{\noindent \textbf{{Proof.~} }}{\qed}
\def\Dbar{\leavevmode\lower.6ex\hbox to 0pt
{\hskip-.23ex\accent"16\hss}D}
\def\url@leostyle{%
  \@ifundefined{selectfont}{\def\UrlFont{\sf}}{\def\UrlFont{\small\ttfamily}}}
\def\bcj{\begin{conjecture}}
\def\ecj{\end{conjecture}}
\def\bcr{\begin{corollary}}
\def\ecr{\end{corollary}}
\def\bd{\begin{definition}}
\def\ed{\end{definition}}
\def\bea{\begin{eqnarray}}
\def\eea{\end{eqnarray}}
\def\bem{\begin{enumerate}}
\def\eem{\end{enumerate}}
\def\bex{\begin{example}}
\def\eex{\end{example}}
\def\bim{\begin{itemize}}
\def\eim{\end{itemize}}
\def\bl{\begin{lemma}}
\def\el{\end{lemma}}
\def\bma{\begin{bmatrix}}
\def\ema{\end{bmatrix}}
\def\bpf{\begin{proof}}
\def\epf{\end{proof}}
\def\bpp{\begin{proposition}}
\def\epp{\end{proposition}}
\def\bqu{\begin{question}}
\def\equ{\end{question}}
\def\br{\begin{remark}}
\def\er{\end{remark}}
\def\bt{\begin{theorem}}
\def\et{\end{theorem}}
\def\btb{\begin{tabular}}
\def\etb{\end{tabular}}
\newcommand{\nc}{\newcommand}
 \nc{\bbA}{\mathbb{A}} \nc{\bbB}{\mathbb{B}} \nc{\bbC}{\mathbb{C}}
 \nc{\bbD}{\mathbb{D}} \nc{\bbE}{\mathbb{E}} \nc{\bbF}{\mathbb{F}}
 \nc{\bbG}{\mathbb{G}} \nc{\bbH}{\mathbb{H}} \nc{\bbI}{\mathbb{I}}
 \nc{\bbJ}{\mathbb{J}} \nc{\bbK}{\mathbb{K}} \nc{\bbL}{\mathbb{L}}
 \nc{\bbM}{\mathbb{M}} \nc{\bbN}{\mathbb{N}} \nc{\bbO}{\mathbb{O}}
 \nc{\bbP}{\mathbb{P}} \nc{\bbQ}{\mathbb{Q}} \nc{\bbR}{\mathbb{R}}
 \nc{\bbS}{\mathbb{S}} \nc{\bbT}{\mathbb{T}} \nc{\bbU}{\mathbb{U}}
 \nc{\bbV}{\mathbb{V}} \nc{\bbW}{\mathbb{W}} \nc{\bbX}{\mathbb{X}}
 \nc{\bbZ}{\mathbb{Z}}
 \nc{\bA}{{\bf A}} \nc{\bB}{{\bf B}} \nc{\bC}{{\bf C}}
 \nc{\bD}{{\bf D}} \nc{\bE}{{\bf E}} \nc{\bF}{{\bf F}}
 \nc{\bG}{{\bf G}} \nc{\bH}{{\bf H}} \nc{\bI}{{\bf I}}
 \nc{\bJ}{{\bf J}} \nc{\bK}{{\bf K}} \nc{\bL}{{\bf L}}
 \nc{\bM}{{\bf M}} \nc{\bN}{{\bf N}} \nc{\bO}{{\bf O}}
 \nc{\bP}{{\bf P}} \nc{\bQ}{{\bf Q}} \nc{\bR}{{\bf R}}
 \nc{\bS}{{\bf S}} \nc{\bT}{{\bf T}} \nc{\bU}{{\bf U}}
 \nc{\bV}{{\bf V}} \nc{\bW}{{\bf W}} \nc{\bX}{{\bf X}}
 \nc{\bZ}{{\bf Z}}
\nc{\cA}{{\cal A}} \nc{\cB}{{\cal B}} \nc{\cC}{{\cal C}}
\nc{\cD}{{\cal D}} \nc{\cE}{{\cal E}} \nc{\cF}{{\cal F}}
\nc{\cG}{{\cal G}} \nc{\cH}{{\cal H}} \nc{\cI}{{\cal I}}
\nc{\cJ}{{\cal J}} \nc{\cK}{{\cal K}} \nc{\cL}{{\cal L}}
\nc{\cM}{{\cal M}} \nc{\cN}{{\cal N}} \nc{\cO}{{\cal O}}
\nc{\cP}{{\cal P}} \nc{\cQ}{{\cal Q}} \nc{\cR}{{\cal R}}
\nc{\cS}{{\cal S}} \nc{\cT}{{\cal T}} \nc{\cU}{{\cal U}}
\nc{\cV}{{\cal V}} \nc{\cW}{{\cal W}} \nc{\cX}{{\cal X}}
\nc{\cZ}{{\cal Z}}
\nc{\hA}{{\hat{A}}} \nc{\hB}{{\hat{B}}} \nc{\hC}{{\hat{C}}}
\nc{\hD}{{\hat{D}}} \nc{\hE}{{\hat{E}}} \nc{\hF}{{\hat{F}}}
\nc{\hG}{{\hat{G}}} \nc{\hH}{{\hat{H}}} \nc{\hI}{{\hat{I}}}
\nc{\hJ}{{\hat{J}}} \nc{\hK}{{\hat{K}}} \nc{\hL}{{\hat{L}}}
\nc{\hM}{{\hat{M}}} \nc{\hN}{{\hat{N}}} \nc{\hO}{{\hat{O}}}
\nc{\hP}{{\hat{P}}} \nc{\hR}{{\hat{R}}} \nc{\hS}{{\hat{S}}}
\nc{\hT}{{\hat{T}}} \nc{\hU}{{\hat{U}}} \nc{\hV}{{\hat{V}}}
\nc{\hW}{{\hat{W}}} \nc{\hX}{{\hat{X}}} \nc{\hZ}{{\hat{Z}}}
\nc{\hn}{{\hat{n}}}
\def\dim{\mathop{\rm Dim}}
\def\max{\mathop{\rm max}}
\def\min{\mathop{\rm min}}
\def\supp{\mathop{\rm supp}}
\def\tr{\mathop{\rm Tr}}
\newcommand{\bra}[1]{\langle#1|}
\newcommand{\ket}[1]{|#1\rangle}
\def\Dbar{\leavevmode\lower.6ex\hbox to 0pt
{\hskip-.23ex\accent"16\hss}D}
\begin{document}
	\title{The entropy of quantum causal networks}
	\author{Xian Shi}\email[]
{shixian01@buaa.edu.cn}
\affiliation{School of Mathematical Sciences, Beihang University, Beijing 100191, China}
\author{Lin Chen}\email[]{linchen@buaa.edu.cn (corresponding author)}
\affiliation{School of Mathematical Sciences, Beihang University, Beijing 100191, China}
\affiliation{International Research Institute for Multidisciplinary Science, Beihang University, Beijing 100191, China}
	\begin{abstract}
\indent Quantum networks play a key role in many scenarios of quantum information theory. Here we consider the quantum causal networks in the manner of entropy. First we present a revised smooth max-relative entropy of quantum combs, then we present a lower and upper bound of a type \uppercase\expandafter{\romannumeral2} error of the hypothesis testing.  Next we present a lower bound of the smooth max-relative entropy for the quantum combs with asymptotic equipartition. At last, we consider the score to quantify the performance of an operator. We present a quantity equaling to the smooth asymptotic version of the performance of a quantum positive operator.
	\end{abstract}
\maketitle
\section{Introduction}
\indent Nowadays, quantum networks attract much attention of researchers from the technological levels and theoretical levels. Technologically, The development of quantum communication \cite{wang2015quantum,pirandola2015advances} and computation \cite{llewellyn2020chip} prompt the realization of the quantum network \cite{chiribella2013quantum,gutoski2018fidelity}. Theoretically, quantum networks provide a framework for quantum games \cite{gutoski2007toward}, the discrimination and transformation of quantum channels \cite{chiribella2008memory,lloyd2011quantum,chiribella2012perfect}. They also prompt the advances of models of quantum computation \cite{bisio2010optimal}.\\
\indent One of the fundamental concepts in quantum information theoy is the relative entropy for two quantum states . In 1962, Umegaki introduced the quantum relative entropy \cite{umegaki1954conditional}, then the relative entropy was extended in terms of Renyi entropy \cite{renyi1961measures},  min- and max-entropy \cite{datta2009min}. Here the max-relative entropy was shown to interpret a number of operational tasks \cite{brandao2011one,napoli2016robustness,anshu2018quantifying,takagi2019general,seddon2020quantifying}. The other approach to study the relative entropy is in the manner of smooth entropy \cite{renner2008security}. The smooth max-relative entropy can also be used to interpret the one-shot cost tasks in terms of entanglement and coherence frameworks \cite{brandao2011one,zhu2017operational}. Another smooth entropy is correlated with quantum hypothesis testing. This is a fundamental task in statistics theory, that is, an experimentalist should make a decision of a binary testing. The aim is to present an optimal strategy to minimize the error possibility. This task correponds to the discrimination of quantum states \cite{ogawa2005strong,li2014second} and channels \cite{cooney2016strong, gour2019quantify}. It also provides a way to compute the capacity of quantum channels \cite{buscemi2010quantum,anshu2018hypothesis}.  Recently, some work were done on the quantum channel in the manner of  the entroy \cite{gour2019quantify,fang2020no}. As far as we know, the quantum networks are not much studied.\\
\indent Quantum networks can be seen as transformations from states to channels, from channels to channels, and from sequences of channels to channels. In \cite{chiribella2009theoretical}, the authors presented a framework of quantum networks, there they also introduced two important concepts, quantum combs and link product. Due to the importance to study the optimalization of quantum networks,  \cite{chiribella2016optimal} proposed a semidefinite programming method for this problem. There the authors presented the measure of a quantum performance  correponds to the max relative entropy, as well as they also present an opertational interpretation of the max relative entropy of quantum combs.\\
\indent In this paper,  we first present a quantity similar to the smooth max-relative entropy of quantum combs $C_0$ with respect to $C_1.$ We also give a bound between the two max-relative entropies of quantum combs. Then we present a relation between the type \uppercase\expandafter{\romannumeral2} error of quantum hypothesis testing and the smooth max-relative entropy of quantum combs we defined. At last, we present a smooth asymptotic version of the performance of a quantum network.\\
\indent This paper is organized as follows.  In section \uppercase\expandafter{\romannumeral2}, we recall the preliminary knowledge needed. In section \uppercase\expandafter{\romannumeral3}, we present the main results. First we present the bound between the type \uppercase\expandafter{\romannumeral2} error of quantum hypothesis testing on quantum combs. Then we present a quantity equaling to the smooth asymptotic version of the performance of a quantum positive operator. In section 
 \uppercase\expandafter{\romannumeral4}, we end with a conclusion.
 \section{Preliminary knowledge}
In this section, we first recall the definition and some properties of the quantum operations and link products. We next recall the definition and properties of quantum causal networks. At last, we recall the maximal relative entropy and present a revised type \uppercase\expandafter{\romannumeral2} error of hypothesis testing on quantum combs we will consider in the following.
\subsection{Linear Maps and Link Products}
\indent In this article, we denote $L(\mathcal{H}_0,\mathcal{H}_1)$ as the set of linear operators from a finite dimensional Hilbert space $\mathcal{H}_0$ to $\mathcal{H}_1,$ $L(\mathcal{H})$ as the set of linear operators from a Hilbert space $\mathcal{H}$ to $\mathcal{H}$, and $Pos(\mathcal{H})$ as the set of positive semidefinite operators on $\mathcal{H}$. When $M$ is a positive semidefinite operator on a Hilbert space $\mathcal{H},$ we also denote $\lambda_{\max}(M)$ as the maximal eigenvalue of $M.$ When $S, T\in Pos(\mathcal{H}),$ $\{S\ge T\}$ is the projection operator on the space $span\{\ket{v}|\bra{v}(S-T)\ket{v}\ge 0\}$.\\
\indent Assume $\mathcal{C}$ is a map from operators on a Hilbert space $\mathcal{H}_A$ to that on $\mathcal{H}_B.$ When $\mathcal{C}$ is completely positive trace non-increasing,  \vspace{2mm} then we say $\mathcal{C}$ is a quantum operation. When $\mathcal{C}$ is completely positive trace-preserving, then we say $\mathcal{C}$ is a quantum channel. In the following, we use the diagrammatic notation
\Qcircuit@C=1.5em @R=1.2em {
	&\ustick{A} \qw&\gate{\mathcal{C}}&\ustick{B}\qw&\qw}, here $\mathcal{C}$ in the above diagram is of type $A\rightarrow B.$\\

\indent  Next we recall the isomorphism between a quantum operation $\mathcal{M}$ in $L(L(\mathcal{H}_0),L(\mathcal{H}_1))$ and an operator $M$ in $L(L(\mathcal{H}_1\otimes\mathcal{H}_0)).$ 
\begin{lemma}(Choi-Jamiolkowski (C-J) isomorphism)\cite{choi1975completely}
	The bijective correspondence $\mathcal{L}: \mathcal{M}\rightarrow M$ is defined as follows:
	\begin{align}
	M=\mathcal{L}(\mathcal{M})=\mathcal{M}\otimes I_{\mathcal{H}_1}(\ket{\Psi}\bra{\Psi}).
	\end{align}
	here we denote $\ket{\Psi}=\sum_i\ket{i}_{\mathcal{H}_0}\ket{i}_{\mathcal{H}_1},$ $\{\ket{i}_{\mathcal{H}_0}\},\{\ket{i}_{\mathcal{H}_1}\}$ are the orthnormal bases of the Hilbert space $\mathcal{H}_0$ and $\mathcal{H}_1.$
	Its inverse is defined as 
	\begin{align}\label{im}
	[\mathcal{L}^{-1}(M)](X)=tr_{\mathcal{H}_0}[(I_{\mathcal{H}_1}\otimes X^T)M]
	\end{align}\\
\end{lemma}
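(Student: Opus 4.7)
The plan is to establish the bijectivity in the standard way: verify that $\mathcal{L}$ is a linear map between two finite-dimensional spaces of equal dimension, and then directly check that the proposed $\mathcal{L}^{-1}$ from equation~(\ref{im}) is a left inverse; bijectivity then follows automatically from a dimension-counting argument.

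First, I would observe that both $L(L(\mathcal{H}_0),L(\mathcal{H}_1))$ and $L(\mathcal{H}_1\otimes\mathcal{H}_0)$ are finite-dimensional with the same dimension $(\dim\mathcal{H}_0)^2(\dim\mathcal{H}_1)^2$, and that $\mathcal{L}$ is linear in its argument $\mathcal{M}$, since $(\mathcal{M}\otimes I_{\mathcal{H}_1})\ket{\Psi}\bra{\Psi}$ depends linearly on $\mathcal{M}$ applied to the fixed input $\ket{\Psi}\bra{\Psi}$.

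The key step is the direct verification that $\mathcal{L}^{-1}\circ\mathcal{L}=\mathrm{id}$. I would expand the maximally entangled state as $\ket{\Psi}\bra{\Psi}=\sum_{i,j}\ket{i}\bra{j}\otimes\ket{i}\bra{j}$, so that
\begin{align}
\mathcal{L}(\mathcal{M})=\sum_{i,j}\mathcal{M}(\ket{i}\bra{j})\otimes\ket{i}\bra{j}.
\end{align}
Plugging this into~(\ref{im}) and using $\bra{j}X^T\ket{i}=X_{ij}$ (with the transpose taken in the basis $\{\ket{i}\}$),
\begin{align}
[\mathcal{L}^{-1}(\mathcal{L}(\mathcal{M}))](X)=\sum_{i,j}\mathcal{M}(\ket{i}\bra{j})\,\bra{j}X^T\ket{i}=\mathcal{M}\Big(\sum_{i,j}X_{ij}\ket{i}\bra{j}\Big)=\mathcal{M}(X),
\end{align}
where the middle equality uses linearity of $\mathcal{M}$. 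This establishes injectivity of $\mathcal{L}$, and combined with the dimension equality it yields bijectivity and identifies the two-sided inverse.

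The main obstacle is notational rather than mathematical: one must carefully track which subsystem is traced over in~(\ref{im}) and interpret the transpose relative to the same orthonormal basis $\{\ket{i}\}$ used to build $\ket{\Psi}$, otherwise spurious basis-change factors appear. No deep step is required once these conventions are fixed; the argument is essentially the well-known Choi construction adapted to the conventions of the paper.
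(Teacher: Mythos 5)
Your proposal is correct: expanding $\ket{\Psi}\bra{\Psi}=\sum_{i,j}\ket{i}\bra{j}\otimes\ket{i}\bra{j}$, computing $[\mathcal{L}^{-1}(\mathcal{L}(\mathcal{M}))](X)=\sum_{i,j}\mathcal{M}(\ket{i}\bra{j})\bra{j}X^T\ket{i}=\mathcal{M}(X)$, and then invoking equality of dimensions to upgrade the left inverse to a two-sided inverse is the standard and complete argument. Note, however, that the paper offers no proof of this lemma at all --- it is stated as a recalled fact with a citation to Choi's original work --- so there is no in-paper argument to compare against; your verification simply supplies the routine computation the authors omit. One small point worth flagging: as written in the paper, $\ket{\Psi}$ is placed on $\mathcal{H}_0\otimes\mathcal{H}_1$ and $M$ is said to live in $L(L(\mathcal{H}_1\otimes\mathcal{H}_0))$, both of which are evidently typographical slips (the entangled state should sit on two copies of the input space $\mathcal{H}_0$, and $M$ on $L(\mathcal{H}_1\otimes\mathcal{H}_0)$); your proof implicitly adopts the corrected convention, which is the right thing to do and is exactly the notational care you identify as the only real obstacle.
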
\par
In the following, we denote $M$ as $\mathcal{L}(M).$ Next we present some facts needed, readers who are interested in the proof of these results please refer to \cite{chiribella2009theoretical}.
\begin{lemma}
(i).	A linear map $\mathcal{M}\in L(\mathcal{H}_0,\mathcal{H}_1)$ is trace preserving if and only if its C-J operator satisfies the following equality
	\begin{align}
	\tr_{\mathcal{H}_1}M=I_{\mathcal{H}_0}.
	\end{align}
(ii).	A linear map $\mathcal{M}$ is Hermitian preserving if and only if its C-J operator is Hermitian.\\
(iii).	A linear map $\mathcal{M}$ is completely positive if and only if its C-J operator is positive semidefinite.
\end{lemma}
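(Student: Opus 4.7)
The plan is to treat each of the three equivalences separately, using the explicit expansion
$M=\sum_{i,j}\mathcal{M}(\ket{i}\bra{j})\otimes\ket{i}\bra{j}$
obtained from the definition of $\mathcal{L}$ applied to $\ket{\Psi}\bra{\Psi}=\sum_{i,j}\ket{i}\bra{j}\otimes\ket{i}\bra{j}$, together with the inversion formula (\ref{im}) to push properties of $M$ back to properties of $\mathcal{M}(X)$ for arbitrary $X$. In each case the forward direction is a short calculation from this expansion, while the converse follows by substituting (\ref{im}) into the definition of the property in question.

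For (i), taking the partial trace of the expansion yields $\tr_{\mathcal{H}_1}M=\sum_{i,j}\tr[\mathcal{M}(\ket{i}\bra{j})]\,\ket{i}\bra{j}$, so trace preservation on the basis elements $\ket{i}\bra{j}$ is equivalent to $\tr_{\mathcal{H}_1}M=I_{\mathcal{H}_0}$. Conversely, taking the full trace of (\ref{im}) gives $\tr\mathcal{M}(X)=\tr[X^{T}\,\tr_{\mathcal{H}_1}M]$, which equals $\tr X$ for every $X$ precisely when $\tr_{\mathcal{H}_1}M=I_{\mathcal{H}_0}$. For (ii), taking the adjoint of the expansion and swapping summation indices gives $M^{\dagger}=\sum_{i,j}\mathcal{M}(\ket{j}\bra{i})^{\dagger}\otimes\ket{i}\bra{j}$; comparing with $M$ shows that $M=M^{\dagger}$ iff $\mathcal{M}(\ket{i}\bra{j})=\mathcal{M}(\ket{j}\bra{i})^{\dagger}$ for all $i,j$, which by linearity is equivalent to $\mathcal{M}$ sending each Hermitian basis element $\ket{i}\bra{j}+\ket{j}\bra{i}$ and $i(\ket{i}\bra{j}-\ket{j}\bra{i})$ to a Hermitian operator, i.e.\ to Hermitian preservation.

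Part (iii) is the substantive step and is essentially Choi's theorem. The forward direction is immediate: $\ket{\Psi}\bra{\Psi}\ge 0$ and $\mathcal{M}\otimes I$ is positive whenever $\mathcal{M}$ is completely positive, so $M\ge 0$. For the converse the plan is to spectrally decompose $M=\sum_{k}\ket{v_k}\bra{v_k}$, identify each $\ket{v_k}\in\mathcal{H}_1\otimes\mathcal{H}_0$ with a Kraus operator $A_k\in L(\mathcal{H}_0,\mathcal{H}_1)$ via the vectorization $\ket{v_k}=(A_k\otimes I)\ket{\Psi}$, and insert this decomposition into (\ref{im}) to verify that $\mathcal{M}(X)=\sum_{k}A_k X A_k^{\dagger}$, a Kraus form that is manifestly completely positive. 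I expect the main obstacle to be purely notational: the transpose on $X$ in (\ref{im}) together with the labeling of $\mathcal{H}_0,\mathcal{H}_1$ in the definition of $\ket{\Psi}$ fixes a specific vectorization convention, and one must track these carefully to recover $\mathcal{M}(X)=\sum_k A_k X A_k^{\dagger}$ rather than a transposed or flipped variant. Beyond this bookkeeping, all three parts reduce to elementary manipulations of the defining identity for $\mathcal{L}$.
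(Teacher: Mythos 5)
Your proposal is correct: the expansion $M=\sum_{i,j}\mathcal{M}(\ketbra{i}{j})\otimes\ketbra{i}{j}$ together with the inversion formula (\ref{im}) gives exactly the standard arguments for (i) and (ii), and for (iii) the forward direction plus the spectral/Kraus decomposition is the usual proof of Choi's theorem; the transpose-convention bookkeeping you flag does work out, yielding $\mathcal{M}(X)=\sum_k A_kXA_k^{\dagger}$ with $A_k\in L(\mathcal{H}_0,\mathcal{H}_1)$. Note that the paper itself supplies no proof of this lemma --- it defers entirely to the cited reference of Chiribella, D'Ariano and Perinotti --- so there is no in-paper argument to compare against; your sketch is the standard one and would fill that gap as stated.
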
\par
\indent Two maps $\mathcal{M}$ and $\mathcal{N}$ are composed if the output space of $\mathcal{M}$ is the input space of $\mathcal{N}.$ If $\mathcal{M}:L(\mathcal{H}_0)\rightarrow L(\mathcal{H}_1),$ $\mathcal{N}:L(\mathcal{H}_1)\rightarrow L(\mathcal{H}_2),$ next we denote $\mathcal{C}=\mathcal{N}\circ\mathcal{M},$ then
\begin{align}
C(X)=&\tr_{\mathcal{H}_1}[(I_{\mathcal{H}_2}\otimes \tr_{\mathcal{H}_0}[(I_{\mathcal{H}_1}\otimes X^T)M]^T)N]\nonumber\\
=&\tr_{\mathcal{H}_1,\mathcal{H}_0}[(I_{\mathcal{H}_2}\otimes I_{\mathcal{H}_1}\otimes X)(I_{\mathcal{H}_2}\otimes M^{T_1})(N\otimes I_{\mathcal{H}_0})],
\end{align} 
comparing with the equality (\ref{im}), we have 
\begin{align}
C=\tr_{\mathcal{H}_1}[(I_{\mathcal{H}_2}\otimes M^{T_1})(N\otimes I_{\mathcal{H}_0})],
\end{align}
then we denote 
\begin{align}\label{lp}
N*M=\tr_{\mathcal{H}_1}[(I_{\mathcal{H}_2}\otimes M^{T_1})(N\otimes I_{\mathcal{H}_0})],
\end{align}
here we denote (\ref{lp}) as the link product of $N$ and $M.$\\
\indent Next we present some properties of the link product. Readers who are interested in the proof of these properties please refer to \cite{chiribella2009theoretical}.
\begin{lemma}\label{lpl}
	1. If $M$ and $N$ are Hermitian, then $M*N$ is Hermitian.\\
	2. If $M$ and $N$ are positive, then $M*N$ is positive.\\
	3. The link product is associative, $i. e.$ $A*(B*C)=(A*B)*C.$\\
	4. $N*M=SWAP_{\mathcal{H}_0,\mathcal{H}_2}(M*N)SWAP_{\mathcal{H}_0,\mathcal{H}_2},$ here $SWAP_{\mathcal{H}_0,\mathcal{H}_2}$ is the unitary operator that swaps the $\mathcal{H}_0$ and $\mathcal{H}_2.$
\end{lemma}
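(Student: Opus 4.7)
My plan is to exploit the Choi--Jamiolkowski duality stated in Lemma 2 together with the fact that the link product (\ref{lp}) was introduced precisely to encode the composition of linear maps. This reduces the first three items to short operational arguments, while item 4 is the only one that really requires a direct manipulation of the explicit formula.

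For item 1, Lemma 2(ii) says that Hermiticity of the C-J operators $M$ and $N$ is equivalent to the corresponding maps $\mathcal{M}$ and $\mathcal{N}$ being Hermitian-preserving. The composition of two Hermitian-preserving maps is again Hermitian-preserving, and by the derivation leading to (\ref{lp}) the link product $M*N$ is the C-J operator of that composition; applying Lemma 2(ii) a second time yields the Hermiticity of $M*N$. Item 2 is proved in exactly the same way after substituting ``completely positive'' for ``Hermitian-preserving'' and using Lemma 2(iii) in place of Lemma 2(ii): the composition of two CP maps is CP, so $M*N\succeq 0$. Item 3 then follows by the same mechanism together with associativity of function composition: via the bijection $\mathcal{L}$ of Lemma 1, both $A*(B*C)$ and $(A*B)*C$ are identified with the C-J operator of the triple composition of the associated maps, so they must coincide.

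For item 4, the two orderings $M*N$ and $N*M$ cannot both be interpreted as honest compositions (the output space of $\mathcal{N}$ is $\mathcal{H}_2$, which does not match the input $\mathcal{H}_0$ of $\mathcal{M}$), so I would compute directly from (\ref{lp}). Expanding the matrix elements of both sides in orthonormal bases of $\mathcal{H}_0$, $\mathcal{H}_1$ and $\mathcal{H}_2$, the partial transpose on the shared space $\mathcal{H}_1$ enters symmetrically in $M$ and $N$, so after carrying out the partial trace over $\mathcal{H}_1$ the only remaining discrepancy between the two sides is the order of the surviving tensor factors $\mathcal{H}_0$ and $\mathcal{H}_2$; conjugation by $SWAP_{\mathcal{H}_0,\mathcal{H}_2}$ interchanges them and yields the claimed identity.

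The main obstacle I foresee is the careful bookkeeping in item 4: one must keep straight which partial transpose acts on $\mathcal{H}_1$, which partial trace eliminates $\mathcal{H}_1$, and on exactly which pair of surviving factors the $SWAP$ acts. The temptation to shortcut through a ``cyclic property'' of the partial trace should be resisted, since that property only holds when one of the operators acts trivially on the traced subsystem; this is precisely why the partial transposes appear in (\ref{lp}) in the first place. Once the index expansion is set up correctly the identity reduces to a straightforward matching of matrix elements, but writing it cleanly requires some notational discipline.
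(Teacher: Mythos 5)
The paper offers no proof of this lemma at all --- it defers entirely to \cite{chiribella2009theoretical} --- so there is no in-paper argument to compare against; your reconstruction is correct and is essentially the standard one from that reference. Items 1--3 do reduce, as you say, to the Choi--Jamiolkowski dictionary of Lemma 2 plus the fact (established in the derivation of (\ref{lp})) that the link product is the C-J operator of the composed map, and your item 4 plan is sound: the crux of the index computation is the identity $\tr_{\mathcal{H}_1}[A^{T_1}B]=\tr_{\mathcal{H}_1}[AB^{T_1}]$ on the shared factor, after which the two sides differ only in the ordering of the surviving factors $\mathcal{H}_0$ and $\mathcal{H}_2$, which conjugation by $SWAP_{\mathcal{H}_0,\mathcal{H}_2}$ repairs. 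Your observation that $M*N$ must be read via the general symmetric definition of the link product rather than via (\ref{lp}) (since the maps do not compose in that order) is exactly the right caution.
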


\subsection{Quantum Networks}
\indent A quantum network is a collection of quantum devices connected with each other. If there are no loops connecting the output of a device to the output of the same device, the quantum network is causal. A network is deterministic if all devices are channels. By the interpretation in \cite{chiribella2009theoretical}, a quantum causal network can always be represented as an ordered sequence of quantum devices, such as Fig. \ref{qc}.
\begin{figure}
	\centering
\includegraphics[width=100mm]{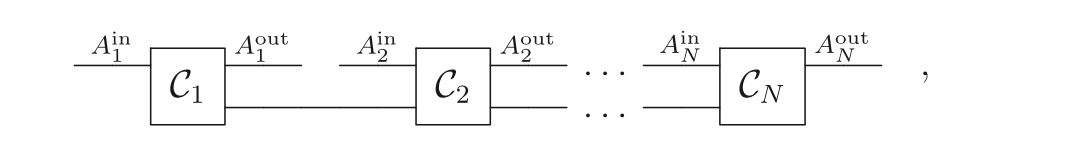}
\caption{Here we plot an example of quantum causal network}\label{qc}
\end{figure}
Here $A_i^{in}(A_i^{out})$ is the input (output) system of the network at the $i$-th step. Assume the C-J operator of  $\mathcal{C}_i$ is $C_i$, then the C-J operator of the above network $(\ref{qc})$ is 
\begin{align}\label{qcc}
C=C_1*C_2*\cdots*C_N.
\end{align}
The C-J operator of a deterministic network is called a quantum comb \cite{chiribella2009theoretical}, it is a positive operator on $\otimes_{j=1}^N(\mathcal{H}_j^{out}\otimes\mathcal{H}_j^{in}),$ here $\mathcal{H}_j^{out}(\mathcal{H}_j^{in})$ is the Hilbert system of $A_j^{out}(A_j^{in})$. Next we recall a result on charactering the quantum combs.
\begin{lemma}\cite{chiribella2009theoretical}\label{qcb}
	A positive operator $C$ is a quantum comb if and only if 
	\begin{align}
	\tr_{A_n^{out}} C^{(n)}=I_{A_n^{in}}\otimes C^{(n-1)}, \hspace{3mm}\forall n\in{1,2,\cdots, N}\label{qc'},
	\end{align}
	here $C^{(n)}$ is a suitable operator on $\mathcal{H}_n=\otimes_{j=1}^n(\mathcal{H}_j^{out}\otimes\mathcal{H}_j^{in}),$ $C^{(N)}=C,$ $C^{(0)}=1.$
\end{lemma}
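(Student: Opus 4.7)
The plan is to prove both implications by induction on $N$, leveraging the Choi-Jamiolkowski correspondence (Lemma 2) and the properties of the link product (Lemma \ref{lpl}). Throughout, I will use the iterated link product $C^{(n)}:=C_1*C_2*\cdots*C_n$, which by Lemma \ref{lpl} (items 2 and 3) is a well-defined positive operator on $\bigotimes_{j=1}^n(\mathcal{H}_j^{out}\otimes\mathcal{H}_j^{in})$, and the convention $C^{(0)}=1$.

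For the forward direction ($\Rightarrow$), the base case $n=1$ reads $\tr_{A_1^{out}}C^{(1)}=I_{A_1^{in}}$, which is precisely Lemma 2(i) applied to the channel $\mathcal{C}_1$. For the inductive step, I would first write $C^{(n)}=C^{(n-1)}*C_n$ using associativity, and then analyze how $\tr_{A_n^{out}}$ interacts with the defining expression $(\ref{lp})$ of the link product. Since $A_n^{out}$ does not appear in $C^{(n-1)}$, the partial trace commutes through and lands on $C_n$, so that $\tr_{A_n^{out}} C^{(n)}=C^{(n-1)}*(\tr_{A_n^{out}}C_n)$. Applying Lemma 2(i) to $\mathcal{C}_n$ (trace-preserving) then replaces $\tr_{A_n^{out}}C_n$ by the identity on $A_n^{in}$ together with the identity on the memory system shared with $C^{(n-1)}$; this memory identity contracts with $C^{(n-1)}$ under the link product and reproduces $C^{(n-1)}$, leaving $I_{A_n^{in}}\otimes C^{(n-1)}$ as required.

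For the backward direction ($\Leftarrow$), I would again induct on $N$. The base case is immediate: if $C=C^{(1)}$ with $\tr_{A_1^{out}}C=I_{A_1^{in}}$, then by Lemma 2 items (i) and (iii), $C$ is the C-J operator of a CPTP map $\mathcal{C}_1$, hence a comb. For the inductive step, assume $C^{(n-1)}$ is a comb realized by channels $\mathcal{C}_1,\ldots,\mathcal{C}_{n-1}$ with some memory system $M_{n-1}$; the goal is to realize $C^{(n)}$ by appending a channel $\mathcal{C}_n$. The idea is to construct $\mathcal{C}_n$ by dilation: view the positive operator $C^{(n)}$ and its marginal $C^{(n-1)}$ as unnormalized states, take Stinespring-type purifications, and use Uhlmann's theorem together with the constraint $\tr_{A_n^{out}}C^{(n)}=I_{A_n^{in}}\otimes C^{(n-1)}$ to produce an isometry from $A_n^{in}\otimes M_{n-1}$ to $A_n^{out}\otimes M_n$ that links the two purifications. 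The Choi operator $C_n$ of the induced channel then satisfies $C^{(n-1)}*C_n=C^{(n)}$ by construction, and Lemma 2(i),(iii) certify that $\mathcal{C}_n$ is a legitimate quantum channel.

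I expect the forward direction to be essentially a bookkeeping exercise once one is comfortable with the identities for the link product. The main obstacle is the backward direction, specifically the step of extracting $\mathcal{C}_n$ from $C^{(n)}$ and $C^{(n-1)}$. The subtlety is that $C^{(n-1)}$ need not be invertible (it has support only on the image of $\mathcal{C}_1,\ldots,\mathcal{C}_{n-1}$), so one cannot naively write $C_n=(C^{(n-1)})^{-1}*C^{(n)}$; the dilation/Uhlmann route circumvents this by working with purifications, where the marginal condition translates into a partial-isometry relation on the purifying spaces. Once this isometry is produced, tracing out the purifying system yields a channel of the required signature, and the equality $C^{(n-1)}*C_n=C^{(n)}$ follows by applying the defining property $(\ref{im})$ on both sides and using the marginal identity (\ref{qc'}).
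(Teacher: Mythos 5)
The paper does not prove this lemma: it is quoted directly from \cite{chiribella2009theoretical}, so there is no in-paper argument to compare against. Your outline reproduces the standard proof from that reference — trace preservation of each $\mathcal{C}_i$ yields the recursive normalization, and the converse extracts each channel by purifying $C^{(n)}$ and $I_{A_n^{in}}\otimes C^{(n-1)}$ and using the equivalence of purifications to obtain an isometric dilation $V_n:A_n^{in}\otimes M_{n-1}\to A_n^{out}\otimes M_n$, which correctly sidesteps the non-invertibility of $C^{(n-1)}$. One bookkeeping point to tighten in the forward direction: $C^{(n)}$ is not literally $C^{(n-1)}*C_n$, because $C_1*\cdots*C_n$ still carries the open memory system $M_n$; you should set $C^{(n)}=\tr_{M_n}[C_1*\cdots*C_n]$ and use $\tr_{A_n^{out}M_n}C_n=I_{A_n^{in}\otimes M_{n-1}}$, after which the contraction with $C_1*\cdots*C_{n-1}$ gives $I_{A_n^{in}}\otimes C^{(n-1)}$ as claimed. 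With that adjustment the argument is sound and is essentially the proof of the cited source.
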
\par
\indent A quantum network $\mathcal{T}$ is non-deterministic if there exists the C-J operator $R$ of a quantum deterministic network $\mathcal{R}$, $R\ge T.$ An example of non-deterministic network is the Fig. \ref{f1}, here $\rho$ is a quantum state, $\mathcal{D}_i$ are quantum channels, $\{P_x\}_x$ is a positive operator-valued measure (POVM). The network of type in Fig. \ref{f1} can be used to probe the network of type  (\ref{qc}), then the network can be represented as Fig. \ref{f2}, then the probability of the outcome $x$ is 
\begin{align}
p_x=\rho*C_1*D_1*C_2*D_2*\cdots*D_{N-1}*C_N*P_x^T\label{px}
\end{align}
by the Lemma \ref{lpl}, when we omit the $SWAP$ operation, that is, we assume the Hilbert spaces have been reordered we need, then $(\ref{px})$ can be written as
\begin{align}
p_x=&(\rho*D_1*D_2*\cdots*D_{N-1}*P_x^T)*(C_1*C_2*\cdots*C_{N})\nonumber\\
=&T_x*C\nonumber\\
\end{align}
here 
\begin{align}
T_x=\rho*D_1*D_2*\cdots*D_{N-1}*P_x^T,
\end{align}
we call the set of operations $\boldsymbol{T}=\{T_x\}_x$ a quantum tester.  Next we recall the mathematical structure of a quantum tester, which was given in \cite{chiribella2009theoretical}.
\begin{lemma}\cite{chiribella2009theoretical}\label{qut}
	Let $\boldsymbol{T}$ be a collection of positive operations on $\otimes_{j=1}^N(\mathcal{H}_j^{out}\otimes\mathcal{H}_j^{in}).$ $\boldsymbol{T}$ is a quantum tester if and only if
	\begin{align}
	\sum_{x\in X} T_x=&I_{A^{out}_N}\otimes \Gamma^{(N)}\label{qt}\\
	\tr_{A_n^{in}}[\Gamma^{(n)}]=&I_{A_{n-1}^{out}}\otimes\Gamma^{(n-1)},\hspace{3mm}n=2,3,\cdots,N.\label{qt1}\\
	\tr_{A_1^{in}}[\Gamma^{(1)}]=&1,\label{qt2}
	\end{align}
	here $\Gamma^{(n)},$ $n=1,2,\cdots,N$ is a positive operator on $\mathcal{H}^{in}_n\otimes[\otimes_{j=1}^n(\mathcal{H}_j^{out}\otimes \mathcal{H}_j^{in})]$.
\end{lemma}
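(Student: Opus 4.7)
The proof splits naturally into the two implications, and I would drive both by the observation that a tester is the \emph{dual} of a probed causal network: the tester's wires are the probed network's wires with input and output roles swapped, so the tester constraints (\ref{qt})--(\ref{qt2}) should look like the comb conditions of Lemma~\ref{qcb} read backwards.

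For the only-if direction, suppose $T_x=\rho*D_1*\cdots*D_{N-1}*P_x^T$, and define $\Gamma^{(n)}:=\rho*D_1*\cdots*D_{n-1}$ (with $\Gamma^{(1)}:=\rho$). Summing over $x$ and using POVM completeness $\sum_x P_x=I_{A_N^{out}}$, together with the elementary fact that a link product with an identity on a disjoint wire is a tensor product, yields (\ref{qt}). The recurrence (\ref{qt1}) follows from the trace-preservation criterion for C-J operators: since $D_{n-1}$ is a channel with output wire $A_n^{in}$, one has $\tr_{A_n^{in}}D_{n-1}=I_{A_{n-1}^{out}}$, and the associativity of the link product (Lemma~\ref{lpl}(3)) lets me push this partial trace through to obtain $\tr_{A_n^{in}}\Gamma^{(n)}=I_{A_{n-1}^{out}}\otimes\Gamma^{(n-1)}$. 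The normalization (\ref{qt2}) is simply $\tr\rho=1$.

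For the if direction, I note that (\ref{qt1})--(\ref{qt2}) are precisely the comb conditions of Lemma~\ref{qcb} applied to a quantum comb whose ordered wires are those of the tester (with $A_j^{in}$ playing the role of an output slot and $A_j^{out}$ the role of an input slot, reversed relative to the probed network). Consequently $\Gamma^{(N)}$ is the C-J operator of a deterministic network, and the realization theorem for combs from~\cite{chiribella2009theoretical} supplies a state $\rho=\Gamma^{(1)}$ and channels $D_1,\dots,D_{N-1}$, with suitable auxiliary memories, such that $\Gamma^{(N)}=\rho*D_1*\cdots*D_{N-1}$. It then remains to exhibit a POVM $\{P_x\}$ with $T_x=\Gamma^{(N)}*P_x^T$. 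Since $\sum_x T_x=I_{A_N^{out}}\otimes\Gamma^{(N)}$ and every $T_x\geq 0$, each $T_x$ is supported inside the range of $I_{A_N^{out}}\otimes\Gamma^{(N)}$, so I would define $P_x$ via a Moore--Penrose-type inverse of $\Gamma^{(N)}$ so that $\Gamma^{(N)}*P_x^T=T_x$ on that support, and extend $P_x$ by an arbitrary positive completion on the kernel so that $\sum_x P_x=I_{A_N^{out}}$. Positivity of $P_x$ is inherited from that of $T_x$, and POVM completeness follows by comparing with (\ref{qt}).

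The step I expect to be hardest is this POVM extraction when $\Gamma^{(N)}$ is rank-deficient: the division by $\Gamma^{(N)}$ must be performed on its support, and one must first verify that each $T_x$ is supported there, which follows from $0\le T_x\le\sum_{x'}T_{x'}=I_{A_N^{out}}\otimes\Gamma^{(N)}$. Additional care is needed with the auxiliary memory wires introduced by the realization theorem (they appear in consecutive $D_i$'s and in $P_x$ but must cancel in the final $T_x$), and the SWAP identity of Lemma~\ref{lpl}(4) is what makes the dualisation between tester and comb wire orderings rigorous.
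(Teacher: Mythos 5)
The paper does not prove this lemma at all --- it is quoted verbatim from \cite{chiribella2009theoretical} as background --- so there is no in-paper argument to compare against; your outline reproduces the standard proof from that reference (comb-type normalization for the deterministic part of the tester, realization of $\Gamma^{(N)}$ as a state followed by channels, and extraction of the POVM by conjugating $T_x$ with the pseudo-inverse square root of $\Gamma^{(N)}$ on its support, justified by $0\le T_x\le I_{A_N^{out}}\otimes\Gamma^{(N)}$) and is sound. The only imprecision worth fixing is that $P_x$ links with the Choi operator of the full deterministic part including the final ancilla $B\cong\supp\Gamma^{(N)}$, not with $\Gamma^{(N)}$ itself (which carries no $A_N^{out}$ wire), but you already flag exactly this ancilla bookkeeping as the delicate step.
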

 
\begin{figure*}
	\centering
	\includegraphics[width=170mm]{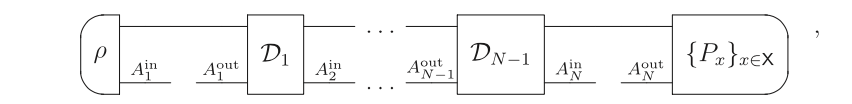}
	\caption{Here we present an example of a non-deterministic network.}	\label{f1}
\end{figure*}
\begin{figure*}
	\centering
	\includegraphics[width=170mm]{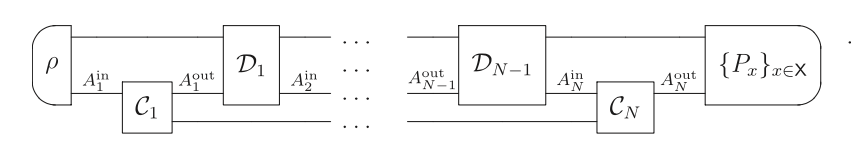}
	\caption{Here we present the type of the network when we probe the quantum comb $(\ref{qc})$ by network of Fig. $\ref{f1}$.}	\label{f2}
\end{figure*}

\subsection{Max-Relative Entropy and Hypothesis Testing}
\indent Here we first recall the definition of max-relative entropy.
\begin{definition}
	Assume $M$ and $N$ are two quantum combs on a Hilbert space $\otimes_{j=1}^n(\mathcal{H}_j^{out}\otimes \mathcal{H}_j^{in}),$ the max entropy of $M$ relative to $N$ is defined as
	\begin{align}
	D_{\max}(M||N)=\log\min\{\lambda|\lambda N-M\ge 0\},
	\end{align}
	when $\supp(M)\not\subset \supp(N),$ $D_{\max}(M||N)=\infty.$\\
	Assume $\epsilon\in (0,1),$ the $\epsilon-$smooth max-relative of $M$ and $N$ is defined as 
	\begin{align}
	D_{\max}^{\epsilon}(M||N)=\min_{\tilde{M}\in S} D_{\max}(\tilde{M}||N),
	\end{align} 
	where the minimum takes over all $\tilde{M}$ in the set $S=\{\tilde{M}|1/2||\tilde{M}-M||\le \epsilon, \tilde{M} \textit{is  a quantum comb.} \}.$
\end{definition}
Next we recall the quantum hypothesis testing on states.  Assume 
\begin{align}
\textit{Null hypothesis     } H_0:\hspace{3mm}\rho\nonumber\\
\textit{Alternative hypothesis } H_1:\hspace{3mm \sigma},
\end{align} 
for a POVM $\{\Pi,I-\Pi\},$ the error probability of type \MakeUppercase{\romannumeral1} and the error probability of type \MakeUppercase{\romannumeral2} are defined as
\begin{align}
\alpha(\MakeUppercase{\romannumeral1})=&\tr[(I-\Pi)\rho],\nonumber\\
\beta(\MakeUppercase{\romannumeral2})=&\tr[\Pi\sigma].
\end{align} 
here $\alpha(\MakeUppercase{\romannumeral1})$ is the probability of accepting $\sigma$ when $\rho$ is true, and $\beta$(\MakeUppercase{\romannumeral2}) is the probability of accepting $\rho$ when $\sigma$ is true. Next we recall the quantity $\beta_{\epsilon}(\rho||\sigma),$ 
\begin{align}
\beta_{\epsilon}(\rho||\sigma)=\min\{\beta_{\Gamma}(\Pi)|\hspace{3mm}\alpha_{\Gamma}(\Pi)\le \epsilon\},
\end{align}
here the minimum takes over all the POVMs $\{\Pi,I-\Pi\}.$ \\
\indent Then we generalize the  hypothesis testing on quantum combs,
\begin{align}
\textit{Null hypothesis  } H_0: \hspace{3mm}C_0,\nonumber\\
\textit{Alternative hypothesis   } H_1: \hspace{3mm}C_1.
\end{align}
We define a quantity on quantum combs similar to $\beta_{\epsilon}(\rho||\sigma),$
\begin{align}
\beta_{\epsilon}(C_0||C_1)=&\min\{\beta_{\Gamma}(\Pi)|\alpha_{\Gamma}(\Pi)\le \epsilon\},\nonumber\\
\beta_{\Gamma}(\Pi)=&\{\tr[\Pi\Gamma^{1/2}C_1\Gamma^{1/2}]|\Gamma\in \textit{DualComb}\}\nonumber\\
\alpha_{\Gamma}(\Pi)=&\{\tr[(I-\Pi)\Gamma^{1/2}C_0\Gamma^{1/2}]|\Gamma\in \textit{DualComb}\}.
\end{align}
here we denote that $\{\Pi,I-\Pi\}$ is a POVM and 
\begin{align}\label{duc}
DualComb=\{\Gamma=&I_{A^{out}_N}\otimes \Gamma^{(N)},\nonumber\\
\tr_{A_n^{in}}[\Gamma^{(n)}]=&I_{A_{n-1}^{out}}\otimes\Gamma^{(n-1)},\hspace{3mm}n=2,3,\cdots,N.\nonumber\\
\tr_{A_1^{in}}[\Gamma^{(1)}]=&1.\}
\end{align}
\section{main results}
\indent In this section, we first present a definition similar to the smooth max-relative entropy $\tilde{D}_{\max}^{\epsilon}$ of two combs, and we present bounds of the quantum hypothesis testing in terms of quantum combs that are on the max-relative entropy of two combs. Then we present a bound of the regularized $\tilde{D}_{\max}^{\epsilon}$ of two combs with respect to the relative entropy of two combs. At last, we present the quantum asymptotic equipartition property in terms of the maximum score for quantum operators.\\
\indent When $C_0$ and $C_1$ are two quantum combs, then we present a quantity similar to the smooth max-relative entropy of $C_0$ with respect to $C_1$
	\begin{align}
\tilde{D}_{\max}^{\epsilon}(C_0||C_1)=&\max_{\Gamma}\min_{C\in \tilde{S^{^{\Gamma}}_{\epsilon}}}D_{\max}(C||\Gamma^{1/2}C_1\Gamma^{1/2})\nonumber\\
\tilde{S^{^{\Gamma}}_{\epsilon}}=\{C|\frac{1}{2}||&\Gamma^{1/2}C_0\Gamma^{1/2}-C||_1\le \epsilon,tr C=1\}\hspace{3mm}\Gamma  \in \textit{DualComb}
\end{align} 
Next we present a relation between the smooth max-relative entropy and the definiton we defined above of $C_0$ with respect to $C_1$.
\begin{theorem}
	Assume $C_0$ and $C_1$ are two quantum combs on $\otimes_{j=1}^n[\mathcal{H}_j^{out}\otimes\mathcal{H}_j^{in}],$ then we have 
	\begin{align}
	\tilde{D}^{\epsilon\prod_{j}\dim d_{j}^{out}}(C_0||C_1)\le D^{\epsilon}(C_0||C_1).
	\end{align}
\end{theorem}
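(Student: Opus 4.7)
The plan is to show, for every dual comb $\Gamma$, that there is a feasible $C$ in the smoothing set $\tilde{S}^{\Gamma}_{\epsilon\prod_j d_j^{out}}$ whose max-relative entropy to $\Gamma^{1/2}C_1\Gamma^{1/2}$ is already bounded by $D^\epsilon(C_0||C_1)$; the outer maximum over $\Gamma$ then inherits the bound. Let $\tilde{C}_0$ be a quantum comb attaining the minimum in $D^\epsilon(C_0||C_1)$, so that $\frac{1}{2}||\tilde{C}_0-C_0||_1\le\epsilon$ and $D_{\max}(\tilde{C}_0||C_1)=D^\epsilon(C_0||C_1)$. For each $\Gamma\in\textit{DualComb}$ I would propose the candidate
\begin{align}
C:=\Gamma^{1/2}\tilde{C}_0\Gamma^{1/2}.
\end{align}

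The first step is to verify $C\in\tilde{S}^{\Gamma}_{\epsilon\prod_j d_j^{out}}$. The normalization $\tr C=\tr(\Gamma\tilde{C}_0)=1$ follows from the comb/dual-comb pairing: alternately trace out $A_n^{out}$ using (\ref{qc'}) to replace $\tr_{A_n^{out}}\tilde{C}_0^{(n)}$ by $I_{A_n^{in}}\otimes\tilde{C}_0^{(n-1)}$, then trace out $A_n^{in}$ using (\ref{qt1}) to replace $\tr_{A_n^{in}}\Gamma^{(n)}$ by $I_{A_{n-1}^{out}}\otimes\Gamma^{(n-1)}$; the iteration collapses to $\tr_{A_1^{in}}\Gamma^{(1)}=1$ from (\ref{qt2}). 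For the trace-distance constraint, the Hölder-type inequality $||AXB||_1\le ||A||_\infty\,||X||_1\,||B||_\infty$ applied with $A=B=\Gamma^{1/2}$ gives
\begin{align}
\frac{1}{2}||\Gamma^{1/2}C_0\Gamma^{1/2}-C||_1=\frac{1}{2}||\Gamma^{1/2}(C_0-\tilde{C}_0)\Gamma^{1/2}||_1\le ||\Gamma||_\infty\cdot\frac{1}{2}||C_0-\tilde{C}_0||_1\le \epsilon\,||\Gamma||_\infty.
\end{align}
I would then bound $||\Gamma||_\infty\le\tr(\Gamma)$ (valid since $\Gamma\ge 0$) and iterate the recursion $\tr(\Gamma^{(n)})=d_{A_{n-1}^{out}}\tr(\Gamma^{(n-1)})$ down to $\tr(\Gamma^{(1)})=1$, which yields $\tr(\Gamma)=d_{A_N^{out}}\tr(\Gamma^{(N)})=\prod_{j=1}^N d_j^{out}$, matching the smoothing factor in the theorem.

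For the max-relative entropy comparison, any $\lambda$ with $\lambda C_1\ge\tilde{C}_0$ remains feasible after conjugation, since $\Gamma^{1/2}\ge 0$ preserves the semidefinite order and therefore $\lambda\Gamma^{1/2}C_1\Gamma^{1/2}\ge\Gamma^{1/2}\tilde{C}_0\Gamma^{1/2}=C$. Hence $D_{\max}(C||\Gamma^{1/2}C_1\Gamma^{1/2})\le D_{\max}(\tilde{C}_0||C_1)=D^\epsilon(C_0||C_1)$. Since $C$ is admissible for every $\Gamma$, the inner minimum in the definition of $\tilde{D}^{\epsilon\prod_j d_j^{out}}$ is at most this value, and taking $\max_\Gamma$ preserves the bound, finishing the argument. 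The only non-routine point I foresee is the norm estimate on $\Gamma$: the coarse bound $||\Gamma||_\infty\le\tr(\Gamma)$ is slack in general, but it already saturates at the stated constant $\prod_j d_j^{out}$ via the dual-comb recursion, so no finer spectral analysis is required; the remaining ingredients are the standard Hölder bound and operator monotonicity of the map $X\mapsto\Gamma^{1/2}X\Gamma^{1/2}$.
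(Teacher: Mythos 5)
Your proposal is correct and follows essentially the same route as the paper: both take the comb $\tilde{C}_0$ optimal for $D^{\epsilon}(C_0||C_1)$, use $\Gamma^{1/2}\tilde{C}_0\Gamma^{1/2}$ as the candidate in the smoothing set, and bound the trace-distance dilation by $\tr\Gamma=\prod_j d_j^{out}$. Your write-up is in fact somewhat tidier on two points --- the H\"older estimate $\|\Gamma^{1/2}(C_0-\tilde{C}_0)\Gamma^{1/2}\|_1\le\|\Gamma\|_\infty\|C_0-\tilde{C}_0\|_1$ replaces a loosely stated intermediate inequality in the paper, and you only need the one-sided monotonicity of $D_{\max}$ under conjugation rather than the full equality $D_{\max}(E_0||E_1)=\max_\Gamma D_{\max}(\Gamma^{1/2}E_0\Gamma^{1/2}||\Gamma^{1/2}E_1\Gamma^{1/2})$ that the paper cites --- but the underlying argument is the same.
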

\begin{proof}
	By the result in \cite{chiribella2016optimal}, we have when $E_0$ and $E_1$ are two quantum combs,
	\begin{align}
	D_{\max}(E_0||E_1)=\max_{\Gamma}D_{\max}(\Gamma^{1/2}E_0\Gamma^{1/2}||\Gamma^{1/2}E_1\Gamma^{1/2}),
	\end{align}
	here $\Gamma\in DualComb$.\par
	Assume $C$ is the optimal in terms of the  $D^{\epsilon}(C_0||C_1)$, as $C$ is a quantum comb, then $\tr\Gamma^{1/2}C\Gamma^{1/2}=1,$ and 
	\begin{align}
	&\tr|\Gamma^{1/2}C_0\Gamma^{1/2}-\Gamma^{1/2}C\Gamma^{1/2}|\nonumber\\
\le &\tr [\Gamma^{1/2}(C_0-C)_{+}\Gamma^{1/2}]\nonumber\\
\le &\epsilon\tr\Gamma=\epsilon\prod_{j}\dim d_{j}^{out}
	\end{align}
	here the second inequality is due to $\frac{1}{2}||C-C_0||\le\epsilon$. Then we have $\Gamma^{1/2}C\Gamma^{1/2}$ is in $\tilde{S}_{\prod_j \dim d_j^{out}\epsilon}^{\Gamma},$ at last, by the definition of $\tilde{D}^{\epsilon}$ and $ D^{\epsilon}(C_0||C_1)$, we finish the proof.
\end{proof}
\begin{theorem}
	Assume $C_0$ and $C_1$ are two quantum combs. Let $\lambda>0,$ $	  \triangle_{\Gamma}(\lambda)=[\Gamma^{1/2}(C_0-\lambda C_1)\Gamma^{1/2}]_{+},$ here $\Gamma\in$ $DualComb,$ then we have 
	\begin{align}
	\tilde{D}_{\max}^{g(\lambda)}(C_0||C_1)\le \max_{\Gamma}\log\frac{\lambda}{\sqrt{1-g_{\Gamma}^2(\lambda)}}, \label{qcdmaxht}
	\end{align}
	here we assume $g_{\Gamma}(\lambda)=\sqrt{\tr[\triangle_{\Gamma}(\lambda)(2-tr\triangle_{\Gamma}(\lambda))]}$ and the maximum takes over all the $\Gamma\in {DualComb}$.
\end{theorem}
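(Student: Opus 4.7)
The plan is to fix $\Gamma\in\textit{DualComb}$ and reduce everything to a state-level inequality about $\rho_\Gamma:=\Gamma^{1/2}C_0\Gamma^{1/2}$ and $\sigma_\Gamma:=\Gamma^{1/2}C_1\Gamma^{1/2}$. Since $\Gamma\in\textit{DualComb}$ and $C_0$ is a comb, the link-product identities give $\tr\rho_\Gamma=\Gamma*C_0=1$, so $\rho_\Gamma$ behaves like a normalized state. Setting $\delta:=\tr\triangle_\Gamma(\lambda)$, a direct simplification yields $g_\Gamma^2(\lambda)=1-(1-\delta)^2$, so the right-hand side for this $\Gamma$ reduces to $\log\frac{\lambda}{1-\delta}$. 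The per-$\Gamma$ task therefore becomes: exhibit a positive operator $C$ with $\tr C=1$, $\tfrac12||C-\rho_\Gamma||_1\le g_\Gamma(\lambda)$, and $C\le\frac{\lambda}{1-\delta}\sigma_\Gamma$. Any such $C$ satisfies $D_{\max}(C||\sigma_\Gamma)\le\log\frac{\lambda}{1-\delta}$; since $g(\lambda)\ge g_\Gamma(\lambda)$ by the outer maximum, $C\in\tilde{S}^{\Gamma}_{g(\lambda)}$, so the inner $\min$ in $\tilde D_{\max}^{g(\lambda)}$ is dominated by this value, and taking $\max_\Gamma$ yields the theorem.

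For the construction of $C$, the first natural candidate $\rho_\Gamma-\triangle_\Gamma(\lambda)=\lambda\sigma_\Gamma-(\rho_\Gamma-\lambda\sigma_\Gamma)_-$ already satisfies the desired upper-bound inequality by design, but a small $2\times 2$ example shows it need not be positive semidefinite, so some repair is needed. The route I would try is via the Holevo--Helstrom-type projector $\Pi:=\{\rho_\Gamma\ge\lambda\sigma_\Gamma\}$: form either the sub-normalized post-measurement operator $(I-\Pi)\rho_\Gamma(I-\Pi)$, or the spectral truncation at $\lambda$ of the sandwiched operator $\sigma_\Gamma^{-1/2}\rho_\Gamma\sigma_\Gamma^{-1/2}$, followed by trace-normalization. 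Either candidate is manifestly positive and, by construction, bounded above by a scalar multiple of $\sigma_\Gamma$. The shape $g_\Gamma(\lambda)=\sqrt{1-(1-\delta)^2}$ strongly suggests routing the closeness estimate through the Fuchs--van de Graaf inequality $T(\rho,\omega)\le\sqrt{1-F(\rho,\omega)^2}$, which converts a fidelity lower bound $F(\rho_\Gamma,C)\ge 1-\delta$ into the desired trace-distance bound; such a fidelity estimate is the natural output of gentle-measurement-type reasoning applied to $\Pi$, which satisfies $\tr[\Pi(\rho_\Gamma-\lambda\sigma_\Gamma)]=\delta$.

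The hard part will be matching both constants simultaneously: the $D_{\max}$ target $\lambda/(1-\delta)$ is tight, so the construction cannot afford any extra factor from normalization, while the fidelity target $1-\delta$ is likewise tight. Naive candidates such as $(I-\Pi)\rho_\Gamma(I-\Pi)/\tr[(I-\Pi)\rho_\Gamma]$ tend to lose an extra $\lambda\tr[\Pi\sigma_\Gamma]$ term in the fidelity that is not available in the target bound, so the construction must be sharpened---perhaps via an operator geometric mean or an explicit variational/SDP argument, and handling support mismatches between $\rho_\Gamma$ and $\sigma_\Gamma$. Once this state-level lemma is nailed down, the comb-level conclusion follows immediately by combining it with the identity $D_{\max}(E_0||E_1)=\max_\Gamma D_{\max}(\Gamma^{1/2}E_0\Gamma^{1/2}||\Gamma^{1/2}E_1\Gamma^{1/2})$ from Chiribella et al.\ that was already invoked in the previous theorem, together with the definition of $\tilde D_{\max}^{g(\lambda)}$.
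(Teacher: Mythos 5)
Your reduction is set up correctly and mirrors the paper's own strategy: fix $\Gamma\in\textit{DualComb}$, use $\tr[\Gamma^{1/2}C_x\Gamma^{1/2}]=\Gamma*C_x=1$ to treat $\rho_\Gamma=\Gamma^{1/2}C_0\Gamma^{1/2}$ and $\sigma_\Gamma=\Gamma^{1/2}C_1\Gamma^{1/2}$ as states, observe that $g_\Gamma^2(\lambda)=1-(1-\delta)^2$ with $\delta=\tr\triangle_\Gamma(\lambda)$ so the target becomes $\log\frac{\lambda}{1-\delta}$, and then note that a per-$\Gamma$ construction of a normalized $C$ with $C\le\frac{\lambda}{1-\delta}\sigma_\Gamma$ and $\tfrac12\|C-\rho_\Gamma\|_1\le g_\Gamma(\lambda)$ finishes the argument. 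All of that matches the paper.

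The problem is that you stop exactly at the step that carries the entire technical weight. The simultaneous existence of a state $C$ meeting both the operator inequality with constant $\lambda(1-\delta)^{-1}$ and the trace-distance bound $g_\Gamma(\lambda)=\sqrt{1-(1-\delta)^2}$ is not something you establish: you correctly observe that $\rho_\Gamma-\triangle_\Gamma(\lambda)$ need not be positive and that the naive gentle-measurement candidates lose a constant, and then you defer the repair to an unspecified ``operator geometric mean or variational/SDP argument.'' That is precisely the content of Lemma C.5 in Brand\~ao--Plenio (the generalized Stein's lemma paper), which the paper invokes at this point, following the smoothing construction of Datta et al.: starting from $\rho_\Gamma\le\lambda\sigma_\Gamma+\triangle_\Gamma(\lambda)$, that lemma produces a state $\rho$ with $\rho\le(1-\delta)^{-1}\lambda\sigma_\Gamma$ and fidelity at least $1-\delta$ to $\rho_\Gamma$, whence the trace-distance bound via Fuchs--van de Graaf, with no lost constants. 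Without either citing that lemma or reproducing its construction (a contraction of the form $\sigma'^{1/2}(\sigma'+\triangle)^{-1/2}$ applied to $\rho_\Gamma$ and renormalized, together with the fidelity estimate), your proposal does not constitute a proof; the part you label ``the hard part'' is the theorem. The surrounding reduction is fine, but as written the argument has a genuine gap at its core.
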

\begin{proof}
		  As $C_x$, $x=0,1$ is a quantum comb, then $C_x$ is a semidefinite positive operator, that is, $\Gamma^{1/2}C_x\Gamma^{1/2}$ is also semidefinite positive,
	\begin{align}
	&tr \Gamma^{1/2} C_x\Gamma^{1/2}\nonumber\\
	=&tr\Gamma C_x\nonumber\\
	=&1\hspace{5mm} x=0,1,
	\end{align}
	the second equality is due to the Lemma $\ref{qcb}$ and $\ref{qut}$, that is, $\Gamma^{1/2}C_x\Gamma^{1/2}$ is a state.  
\par	Then we could prove the theorem by a similar method in \cite{datta2013smooth}, next we present the proof for the integrity.\par
	  Here we denote
	  \begin{align}
	  \triangle_{\Gamma}(\lambda)=[\Gamma^{1/2}(C_0-\lambda C_1)\Gamma^{1/2}]_{+},
	  \end{align}
	   then we have 
	   \begin{align}
	   \Gamma^{1/2}C_0\Gamma^{1/2}\le \Gamma^{1/2}\lambda C_1\Gamma^{1/2}+\triangle_{\Gamma}(\lambda),
	   \end{align}
	   by the Lemma C. 5 in \cite{brandao2010generalization}, there exists a state $\rho$  such that 
	   \begin{align}
	   \rho\le (1-tr\triangle_{\Gamma}(\lambda))^{-1}&\Gamma^{1/2}\lambda C_1\Gamma^{1/2},\nonumber\\
	   D_{\max}(\rho||\Gamma^{1/2}C_1\Gamma^{1/2})&\le \log[\lambda(1-tr\triangle_{\Gamma}(\lambda))^{-1}],\nonumber\\
\frac{1}{2}||\rho-\Gamma^{1/2}C_0\Gamma^{1/2}||&\le g_{\Gamma}(\lambda),
	   \end{align}
when $\triangle_{\Gamma}$ gets the maximum with the variable $\Gamma$, the function $\log[\lambda(1-tr\triangle_{\Gamma}(\lambda))^{-1}]$ gets the maximum.
then we finish the proof.
\end{proof}
\begin{remark}\label{trianglelambda}
By the same method in \cite{datta2013smooth}, we have when $\supp C_0\subset\supp C_1,$ $tr\triangle_{\Gamma}(\lambda)$ is strictly decreasing and continuous on $[0,2^{D_{\max}(\Gamma^{1/2}C_0\Gamma^{1/2}||\Gamma^{1/2}C_1\Gamma^{1/2})}]$ with range $[0,1],$ and $g_{\Gamma}(\lambda)$ is strictly decreasing and continuous on $[0,2^{D_{\max}(\Gamma^{1/2}C_0\Gamma^{1/2}||\Gamma^{1/2}C_1\Gamma^{1/2})}]$ with range $[0,1]$. 
\end{remark}
\par\indent Next we present a relationship between the max-relative entropy and the hypothesis testing in terms of quantum combs.
\begin{theorem}
	Assume $C_0$ and $C_1$ are two quantum combs, then we have 
	\begin{align}
\tilde{D}_{\max}^{f(\epsilon)}	\le-\log\beta_{1-\epsilon}(C_0||C_1)\le \tilde{D}_{\max}^{\epsilon^{'}}(C_0||C_1)+\log\frac{1}{\epsilon-\epsilon^{'}},
	\end{align}
	here $f(\epsilon)=\sqrt{1-(1-\epsilon)^2}.$
\end{theorem}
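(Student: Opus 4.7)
My plan is to reduce the problem to the state level by parametrizing over $\Gamma\in\textit{DualComb}$, and then invoke the corresponding state-level inequalities. For each $\Gamma$, the operators $\rho_i^\Gamma := \Gamma^{1/2}C_i\Gamma^{1/2}$ ($i=0,1$) are bona fide quantum states, since $\tr\rho_i^\Gamma = \tr[\Gamma C_i]=1$ by the structural relations on combs and dual combs in Lemmas \ref{qcb} and \ref{qut} (exactly as used in the proof of the previous theorem). The constraint defining $\tilde S^\Gamma_\epsilon$ is then precisely the standard trace-distance smoothing on $\rho_0^\Gamma$, so $\tilde D^{\epsilon}_{\max}(C_0||C_1)=\max_\Gamma D^{\epsilon}_{\max}(\rho_0^\Gamma||\rho_1^\Gamma)$ in terms of the usual state smooth max-relative entropy, and splitting the joint minimization over $(\Gamma,\Pi)$ gives $\beta_{1-\epsilon}(C_0||C_1)=\min_\Gamma \beta^{\mathrm{st}}_{1-\epsilon}(\rho_0^\Gamma||\rho_1^\Gamma)$. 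Taking $\max_\Gamma$ on both sides therefore reduces the theorem to the state-level version
\[
D^{f(\epsilon)}_{\max}(\rho||\sigma) \le -\log\beta^{\mathrm{st}}_{1-\epsilon}(\rho||\sigma) \le D^{\epsilon'}_{\max}(\rho||\sigma) + \log\tfrac{1}{\epsilon-\epsilon'},
\]
to be proven for arbitrary states $\rho,\sigma$.

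For the right state inequality, pick a near-optimal smoothed state $\tilde\rho$ in $D^{\epsilon'}_{\max}(\rho||\sigma)$, so $\tilde\rho\le 2^{k}\sigma$ with $k:=D^{\epsilon'}_{\max}(\rho||\sigma)$ and $\tfrac12\|\rho-\tilde\rho\|_1\le\epsilon'$. For every POVM element $\Pi$ satisfying the test constraint $\tr[\Pi\rho]\ge\epsilon$, the trace-distance bound $|\tr[\Pi(\rho-\tilde\rho)]|\le\tfrac12\|\rho-\tilde\rho\|_1\le\epsilon'$ yields $\tr[\Pi\tilde\rho]\ge\epsilon-\epsilon'$, while the Löwner bound yields $\tr[\Pi\tilde\rho]\le 2^{k}\tr[\Pi\sigma]$. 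Combining these gives $\tr[\Pi\sigma]\ge(\epsilon-\epsilon')\cdot 2^{-k}$; minimizing over valid $\Pi$ and taking $-\log$ delivers the upper bound on $-\log\beta^{\mathrm{st}}_{1-\epsilon}$.

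For the left state inequality I follow the Neyman–Pearson template of Datta and Hsieh \cite{datta2013smooth}. Let $\Pi^*$ be an optimal Neyman–Pearson test achieving $\beta:=\beta^{\mathrm{st}}_{1-\epsilon}(\rho||\sigma)$, built from the positive part of $\rho-\mu\sigma$ for an appropriate threshold $\mu$, so in particular $\tr[\Pi^*\rho]\ge\epsilon$. Then take the candidate smoothed state $\tilde\rho:=\sigma^{1/2}\Pi^*\sigma^{1/2}/\beta$. Normalization is automatic ($\tr\tilde\rho=\tr[\Pi^*\sigma]/\beta=1$), and the Löwner bound $\tilde\rho\le\beta^{-1}\sigma$ is immediate from $\sigma^{1/2}\Pi^*\sigma^{1/2}\le\sigma$, giving $D_{\max}(\tilde\rho||\sigma)\le -\log\beta$ at once. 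The remaining step is to prove $\tfrac12\|\rho-\tilde\rho\|_1\le f(\epsilon)=\sqrt{1-(1-\epsilon)^2}$, which by Fuchs–van de Graaf reduces to the fidelity estimate $F(\rho,\tilde\rho)\ge 1-\epsilon$; exploiting the identity $F(\rho,\tilde\rho)=\|\rho^{1/2}\sigma^{1/2}\sqrt{\Pi^*}\|_1/\sqrt{\beta}$ together with the Neyman–Pearson optimality of $\Pi^*$, this reduces to a direct singular-value computation.

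The main obstacle is the fidelity lower bound in the left state inequality: obtaining exactly $F(\rho,\tilde\rho)\ge 1-\epsilon$, and therefore the correct smoothing radius $f(\epsilon)$, requires using the extremal Neyman–Pearson structure of $\Pi^*$, since a direct Fuchs–van de Graaf estimate on a generic gentle-measurement state yields a strictly weaker smoothing radius (of order $\sqrt{\epsilon}$ or $\sqrt{1-\epsilon}$). The right state inequality and the $\max_\Gamma$ reduction are routine by comparison; the two state-level bounds together with the reduction in the first paragraph then give the theorem.
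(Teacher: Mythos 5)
Your reduction to the state level over $\Gamma\in DualComb$ is sound (both $\tilde D^{\epsilon}_{\max}$ and $-\log\beta_{1-\epsilon}$ are a $\max_\Gamma$ of the corresponding state quantities for $\rho_i^\Gamma=\Gamma^{1/2}C_i\Gamma^{1/2}$, and $\max$ preserves a pointwise chain of inequalities), and your argument for the right-hand inequality is correct and is essentially the paper's own argument stated in contrapositive form. The problem is the left-hand inequality.

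Your candidate smoothed state $\tilde\rho=\sigma^{1/2}\Pi^*\sigma^{1/2}/\beta$ does not lie in the $f(\epsilon)$-ball around $\rho$, and the fidelity bound $F(\rho,\tilde\rho)\ge 1-\epsilon$ that you defer to a ``direct singular-value computation'' is false. Already in the commuting case, with $\rho=\diag(p_i)$, $\sigma=\diag(q_i)$ and $\Pi^*$ the indicator of a set $S$, your own identity gives $F(\rho,\tilde\rho)=\beta^{-1/2}\sum_{i\in S}\sqrt{p_iq_i}\le\sqrt{\sum_{i\in S}p_i}=\sqrt{\tr[\Pi^*\rho]}$ by Cauchy--Schwarz; since the constraint here is only $\tr[\Pi\rho]\ge\epsilon$ and the optimal test saturates it, this yields $F(\rho,\tilde\rho)\le\sqrt{\epsilon}$, hence $\tfrac12\|\rho-\tilde\rho\|_1\ge 1-F\ge 1-\sqrt{\epsilon}$, which vastly exceeds $f(\epsilon)=\sqrt{2\epsilon-\epsilon^2}$ for small $\epsilon$. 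The missing idea is that the smoothing must be performed on $\rho$ by clipping off the small-trace positive part $(\rho-\lambda\sigma)_+$, not by compressing $\sigma$ onto the acceptance region of the test (a perturbation of order $1$ when $\tr[\Pi^*\rho]=\epsilon$ is small). Concretely, the paper invokes its preceding theorem, $\tilde D^{g_\Gamma(\lambda)}_{\max}(C_0\|C_1)\le\max_\Gamma\log\bigl(\lambda/\sqrt{1-g_\Gamma^2(\lambda)}\bigr)$, which rests on Lemma C.5 of Brand\~ao--Plenio: from $\Gamma^{1/2}C_0\Gamma^{1/2}\le\lambda\Gamma^{1/2}C_1\Gamma^{1/2}+\triangle_\Gamma(\lambda)$ with $\triangle_\Gamma(\lambda)=[\Gamma^{1/2}(C_0-\lambda C_1)\Gamma^{1/2}]_+$ one obtains a state $\rho'\le(1-\tr\triangle_\Gamma(\lambda))^{-1}\lambda\,\Gamma^{1/2}C_1\Gamma^{1/2}$ within trace distance $g_\Gamma(\lambda)=\sqrt{\tr\triangle_\Gamma(\lambda)\,(2-\tr\triangle_\Gamma(\lambda))}$ of $\Gamma^{1/2}C_0\Gamma^{1/2}$. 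Choosing $\lambda$ so that $\tr\triangle_\Gamma(\lambda)=\epsilon$ (Remark \ref{trianglelambda}) makes this radius exactly $f(\epsilon)$ and $\sqrt{1-g_\Gamma^2(\lambda)}=1-\epsilon$, while the projector $\Pi_\Gamma=\{\Gamma^{1/2}C_0\Gamma^{1/2}\ge\lambda\Gamma^{1/2}C_1\Gamma^{1/2}\}$ satisfies $\alpha_\Gamma(\Pi_\Gamma)\le 1-\epsilon$ and $\beta_\Gamma(\Pi_\Gamma)\le(1-\epsilon)/\lambda$, which chains to $-\log\beta_{1-\epsilon}\ge\log\lambda-\log(1-\epsilon)\ge\tilde D^{f(\epsilon)}_{\max}(C_0\|C_1)$. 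Your proposal would need to be rebuilt around this clipping construction (or an equivalent one); as written, the key step fails.
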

\begin{proof}
\indent First we prove the upper bound. If we could prove $\forall\Pi, 0\le \Pi\le I$ such that
\begin{align}
\log\beta_{\Gamma}(\Pi)<-\tilde{D}_{\max}^{\epsilon^{'}}(C_0||C_1)-\log\frac{1}{\epsilon-\epsilon^{'}},\label{assume}
\end{align}
then 
\begin{align}
\alpha_{\Gamma}(\Pi)>1-\epsilon.
\end{align}
 By the definition of $\tilde{D}_{\max}^{\epsilon}(C_0||C_1),$ we have there exists a state $C$ and a tester $\Gamma$ such that 
 \begin{align}
 C\le 2^{\tilde{D}_{\max}^{\epsilon^{'}}(C_0||C_1)}\Gamma^{1/2}C_1\Gamma^{1/2},\label{dmax'}
 \end{align}
 then we have
  \begin{align}
 tr\Pi  C\le& 2^{\tilde{D}_{\max}^{\epsilon^{'}}(C_0||C_1)}tr \Pi \Gamma^{1/2}C_1\Gamma^{1/2}\nonumber\\
 =&2^{\tilde{D}_{\max}^{\epsilon^{'}}(C_0||C_1)}\beta_{\Gamma}(\Pi)\nonumber\\
 <&2^{\tilde{D}_{\max}^{\epsilon^{'}}(C_0||C_1)}2^{-[{\tilde{D}_{\max}^{\epsilon^{'}}(C_0||C_1)}+\log(\epsilon-\epsilon^{'})]}\nonumber\\
 =&\epsilon-\epsilon^{'},
 \end{align}the first inequality is due to $(\ref{dmax'})$
 the second inequality is due to $(\ref{assume}).$
 Then we have 
 \begin{align}
 1-\alpha_{\Gamma}(\Pi)=&tr(\Pi \Gamma^{1/2}C_0\Gamma^{1/2})\nonumber\\
 =&tr(\Pi C)+\tr[\Pi (\Gamma^{1/2}C_0\Gamma^{1/2}-C)]\nonumber\\
 \le &\epsilon-\epsilon^{'}+|| \Gamma^{1/2}C_0\Gamma^{1/2}-C||/2\nonumber\\
 \le &\epsilon-\epsilon^{'}+\epsilon^{'}\le \epsilon,
 \end{align}
 \indent Next we show the other hand. Here we assume $\Gamma$ is the optimal in terms of $\tilde{D}^{f(\epsilon)}_{max}(C_0||C_1).$ By the remark $\ref{trianglelambda},$ then we have there exists $\lambda$ such that $tr\triangle_{\Gamma}(\lambda)=\epsilon,$ and we denote $\Pi_{\Gamma}=\{\Gamma^{1/2}C_0\Gamma^{1/2}\ge\Gamma^{1/2}\lambda C_1\Gamma^{1/2}\}$, then
 \begin{align}
 tr\Pi_{\Gamma} \Gamma^{1/2}C_0\Gamma^{1/2}\ge tr\Pi_{\Gamma}\Gamma^{1/2}(C_0-\lambda C_1)\Gamma^{1/2}=\epsilon,
 \end{align}
 that is, $\alpha_{\Gamma}(\Pi)\le 1-\epsilon,$ hence we have 
 \begin{align}
 -\log\beta_{1-\epsilon}(C_0||C_1)\ge -\log\beta_{\Gamma}(\Pi),
 \end{align}
 as 
 \begin{align}
  \epsilon=&tr\triangle(\lambda)\nonumber\\=&tr\Pi_{\Gamma}\Gamma^{1/2}(C_0-\lambda C_1)\Gamma^{1/2}\nonumber\\\le& 1-\lambda tr\Pi_{\Gamma} \Gamma^{1/2}C_1\Gamma^{1/2},
 \end{align}
 that is, $\beta_{\Gamma}(\Pi)\le \frac{1-\epsilon}{\lambda},$
then we have there exists a $\Gamma$ such that
\begin{align}
&-\log\beta_{\Gamma}(\Pi)\nonumber\\
\ge& \log\lambda-\log(1-\epsilon)\nonumber\\
\ge& \tilde{D}_{\max}^{f(\epsilon)}(C_0||C_1)+\log (\sqrt{1-g_{\Gamma}(\lambda)})-\log(1-\epsilon)\nonumber\\=&D_{\max}^{f(\epsilon)}(C_0||C_1),
\end{align}
 the second inequality is due to the continuity of $g_{\Gamma}(\lambda)$ and the range of $g_{\Gamma}(\lambda)$ is $[0,1]$ which is presented in the Remark  \ref{trianglelambda}.
\end{proof}
\par Then we present a result on the max-relative entropy of quantum combs by the generalized quantum stein's lemma \cite{brandao2010generalization}. 
\begin{theorem}
	Assume $C_0$ is a quantum comb, $C_1$ is a quantum comb with full rank, then we have 
	\begin{align}
	&\lim_{\epsilon\rightarrow 0}\lim_{k\rightarrow\infty}\frac{1}{k}\tilde{D}^{\epsilon}_{\max}(C^{\otimes k}_0||C^{\otimes k}_1)\nonumber\\
	\ge&\max_{\Gamma\in DualComb} D(\Gamma^{1/2}C_0\Gamma^{1/2}||\Gamma^{1/2} C_1\Gamma^{1/2}), \label{gsl}
	\end{align}
\end{theorem}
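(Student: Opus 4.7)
The plan is to reduce the comb inequality to a standard two-state inequality by choosing a suitable dual comb, and then invoke the quantum asymptotic equipartition property for the smooth max-relative entropy of states. Fix any $\Gamma \in DualComb$ and set $\rho_0 := \Gamma^{1/2}C_0\Gamma^{1/2}$, $\rho_1 := \Gamma^{1/2}C_1\Gamma^{1/2}$. As observed in the proof of the preceding theorem, both are positive operators with unit trace (by Lemmas \ref{qcb} and \ref{qut}), hence genuine quantum states; moreover $\supp\rho_0\subseteq\supp\Gamma=\supp\rho_1$ because $C_1$ has full rank, so all relative entropies involved are finite.

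Next, I would verify that $\Gamma^{\otimes k}$ is itself a valid dual comb for the $k$-fold comb structure of $C_0^{\otimes k}, C_1^{\otimes k}$. If $\Gamma$ is built from the tower $\{\Gamma^{(n)}\}_{n=1}^{N}$ satisfying (\ref{duc}), then $\Gamma^{\otimes k}$ is built from the tower $\{(\Gamma^{(n)})^{\otimes k}\}_{n=1}^{N}$, and each of the three defining equations in (\ref{duc}) is preserved under tensor powers because partial traces commute with tensor products and $I_A^{\otimes k}=I_{A^{\otimes k}}$. The only point requiring care is that $C^{\otimes k}$ naively lives on $\bigotimes_{j=1}^{N}(\mathcal{H}_j^{out}\otimes\mathcal{H}_j^{in})^{\otimes k}$ rather than on $\bigotimes_{j=1}^{N}((\mathcal{H}_j^{out})^{\otimes k}\otimes (\mathcal{H}_j^{in})^{\otimes k})$; the two are unitarily related by a permutation of tensor factors in the spirit of the SWAP appearing in Lemma \ref{lpl}, so this is a bookkeeping step.

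Inserting this feasible choice $\Gamma^{\otimes k}$ into the outer maximum in the definition of $\tilde D^{\epsilon}_{\max}(C_0^{\otimes k}||C_1^{\otimes k})$, and observing that the constraint set $\tilde S^{\Gamma^{\otimes k}}_{\epsilon}$ is exactly the trace-norm $\epsilon$-ball of unit-trace operators around $\rho_0^{\otimes k}$, one obtains
\begin{align}
\tilde D^{\epsilon}_{\max}(C_0^{\otimes k}||C_1^{\otimes k}) \;\ge\; D^{\epsilon}_{\max}(\rho_0^{\otimes k}||\rho_1^{\otimes k}),
\end{align}
where the right-hand side is the standard state-level smooth max-relative entropy. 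The quantum asymptotic equipartition property, a corollary of the generalized quantum Stein's lemma of \cite{brandao2010generalization}, then gives
\begin{align}
\lim_{\epsilon\to 0}\lim_{k\to\infty}\frac{1}{k}D^{\epsilon}_{\max}(\rho_0^{\otimes k}||\rho_1^{\otimes k}) \;=\; D(\rho_0||\rho_1).
\end{align}
Since $\Gamma \in DualComb$ was arbitrary, maximizing over $\Gamma$ yields (\ref{gsl}).

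The hard part is the tensor-product verification in the second step: one has to check carefully that $\Gamma^{\otimes k}$, under the appropriate reordering of tensor factors, really satisfies the nested trace constraints that define a dual comb for the $k$-fold comb, and that the induced feasibility inequality survives the reordering. Once this bookkeeping is under control, the remaining work is done entirely at the level of states by the AEP, and the comb structure enters the argument only through the choice of $\Gamma$.
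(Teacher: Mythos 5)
Your proposal is correct and follows essentially the same route as the paper: restrict the outer maximum to the feasible dual comb $\Gamma^{\otimes k}$, note that the inner minimization then becomes the state-level smooth max-relative entropy of $(\Gamma^{1/2}C_0\Gamma^{1/2})^{\otimes k}$ relative to $(\Gamma^{1/2}C_1\Gamma^{1/2})^{\otimes k}$, invoke the asymptotic result of \cite{brandao2010generalization}, and finally maximize over $\Gamma$. The only cosmetic difference is that you apply the i.i.d.\ asymptotic equipartition property directly, whereas the paper first verifies the five Brand\~{a}o--Plenio axioms for the family of sets $M_k^{\Gamma}$ --- a verification that your fixed-$\Gamma$ reduction renders unnecessary.
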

\begin{proof}
	Here we assume 
	\begin{align}
		M_k^{\Gamma}=\{(\Gamma^{1/2})^{\otimes k}C_1^{\otimes k}(\Gamma^{1/2})^{\otimes k}|\Gamma\in DualComb.\},
	\end{align}
\par
	Next we prove $M_k^{\Gamma}$ satisfies the properties 1-5 in \cite{brandao2010generalization}.\\
	(i) the set $\{\Gamma^{(1)}|\tr_{A^{in}_1}[\Gamma^{(1)}]=1\}$ is closed, then as the preimage of a closed set is closed, then we have the set is closed, then the set $M_1^{\Gamma}$ is closed. And by the definition of $Comb^{\otimes k}$, we have the set $M_k^{\Gamma}$ is closed.\\
	(ii)  the set $M_k^{\Gamma}$ is a set owning an operator with full rank.\\
	(iii) Assume $D_{k+1}$ is an element in the set $M_{k+1}^{\Gamma},$as $$\tr_{l}(\Gamma^{1/2})^{\otimes k+1}C_1^{\otimes k+1}(\Gamma^{1/2})^{\otimes k+1}=(\Gamma^{1/2})^{\otimes k}C_1^{\otimes k}(\Gamma^{1/2})^{\otimes k},$$
 then we will show that when $l$ is arbitrary, then $\tr_{l}D_{k+1}\in M_k $.\\
 (iv)  as \begin{align}
 (\Gamma^{1/2})^{\otimes s}C_1^{\otimes s}(\Gamma^{1/2})^{\otimes s} \in M_s^{\Gamma},\nonumber\\ (\Gamma^{1/2})^{\otimes t}C_1^{\otimes t}(\Gamma^{1/2})^{\otimes t } \in M_t^{\Gamma},
 \end{align}
 then from the definition of $M_k^{\Gamma},$ then we have 
 \begin{align}
  (\Gamma^{1/2})^{\otimes s}C_1^{\otimes s}(\Gamma^{1/2})^{\otimes s}\otimes(\Gamma^{1/2})^{\otimes t}C_1^{\otimes t}(\Gamma^{1/2})^{\otimes t }\in M^{\Gamma}_{s+t}.
 \end{align}
 (v) Assume $(\Gamma^{1/2})^{\otimes k}C_1^{\otimes k}(\Gamma^{1/2})^{\otimes k}\in M_k^{\Gamma},$ then $P_{\pi}[(\Gamma^{1/2})^{\otimes k}C_1^{\otimes k}(\Gamma^{1/2})^{\otimes k}]=(\Gamma^{1/2})^{\otimes k}C_1^{\otimes k}(\Gamma^{1/2})^{\otimes k},$ $\pi$ is a permutation.\\
 Next as
 	\begin{align}
\lim_{\epsilon\rightarrow 0}\lim_{k\rightarrow\infty}\frac{1}{k}\tilde{D}^{\epsilon}_{\max}(C^{\otimes k}_0||C^{\otimes k}_1)\ge	&\lim_{\epsilon\rightarrow 0}\lim_{k\rightarrow\infty}\max_{\Gamma\in DualComb}\min_{C\in \tilde{S^{{\Gamma}^{\otimes k}}_{\epsilon}}}\frac{1}{k} D_{\max}(C||(\Gamma^{1/2})^{\otimes k}(C_1)^{\otimes k}(\Gamma^{1/2})^{\otimes k})\nonumber\\
 	\ge& \lim_{\epsilon\rightarrow 0}\lim_{k\rightarrow\infty}\min_{C\in \tilde{S^{{\Gamma}^{\otimes k}}_{\epsilon}}}\frac{1}{k} D_{\max}(C||(\Gamma^{1/2})^{\otimes k}C^{\otimes k}_1(\Gamma^{1/2})^{\otimes k}),
 	\end{align}
then 
	\begin{align}
&\lim_{\epsilon\rightarrow 0}\lim_{k\rightarrow\infty}\max_{\Gamma\in DualComb}\min_{C\in \tilde{S^{{\Gamma}^{\otimes k}}_{\epsilon}}}\frac{1}{k} D_{\max}(C||(\Gamma^{1/2}C_1\Gamma^{1/2})^{\otimes k})\nonumber\\
\ge& \lim_{\epsilon\rightarrow 0}\lim_{k\rightarrow\infty}\min_{C\in \tilde{S^{{\Gamma}^{\otimes k}}_{\epsilon}}}\frac{1}{k} D_{\max}(C||(\Gamma^{1/2}C_1\Gamma^{1/2})^{\otimes k}),
\end{align}
and we have
 \begin{widetext}
 \begin{align}
 \lim_{\epsilon\rightarrow 0}\lim_{k\rightarrow\infty}\max_{\Gamma\in DualComb}\min_{C\in \tilde{S^{{\Gamma}^{\otimes k}}_{\epsilon}}}\frac{1}{k}  {D}_{\max}(C||(\Gamma^{1/2}C_1\Gamma^{1/2})^{\otimes k})\ge \max_{\Gamma\in DualComb}\lim_{\epsilon\rightarrow 0}\lim_{k\rightarrow\infty}\min_{C\in \tilde{S^{{\Gamma}^{\otimes k}}_{\epsilon}}}\frac{1}{k} {D}_{\max}(C||(\Gamma^{1/2}C_1\Gamma^{1/2})^{\otimes k}),\label{mutual}
 \end{align}
 \end{widetext}
 At last, by the Proposition $\uppercase\expandafter{\romannumeral2}.1$ in \cite{brandao2010generalization}, we have 
 \begin{align}
 &\max_{\Gamma\in DualComb}\lim_{\epsilon\rightarrow 0}\lim_{k\rightarrow\infty}\min_{C\in \tilde{S^{{\Gamma}^{\otimes k}}_{\epsilon}}}\frac{1}{k} {D}_{\max}(C||(\Gamma^{1/2}C_1\Gamma^{1/2})^{\otimes k})\nonumber\\
 =&\max_{\Gamma\in DualComb} \lim_{k\rightarrow\infty}\frac{1}{k}D((\Gamma^{1/2}C_0\Gamma^{1/2})^{\otimes k}||(\Gamma^{1/2} C_1\Gamma^{1/2})^{\otimes k})\nonumber\\
 =&\max_{\Gamma\in DualComb} D(\Gamma^{1/2}C_0\Gamma^{1/2}||\Gamma^{1/2} C_1\Gamma^{1/2}).
 \end{align}
\end{proof}
\par    \indent At last, we recall a scenario of assessing the performance of an unknown quantum network \cite{chiribella2016optimal}. There a quantum network is connected to a quantum tester, when the tester returns an outcome $x$, we assign a weight $w_x$, let $C$ be the quantum comb and $T=\{T_x|x\in Y\}$ be the quantum tester, then the average score is given by 
\begin{align}
w=&\sum_x w_x(T_x*C)\nonumber\\
=&\Omega*C,
\end{align}
here $\Omega=\sum_x w_x T_x,$ here we call $\Omega$ the performance operator, then we have the performance of a quantum network $C$ is determined by $\Omega$.  And the maximum score is 
\begin{align}
w_{\max}=&\max_{C\in \mathcal{C}}\Omega*C\nonumber\\
=&\max_{C\in\mathcal{C}}tr\Omega C^T\nonumber\\
=&\max_{C\in\mathcal{C}}tr \Omega C.
\end{align}
Here we denote the set $\mathcal{C}$ as the set of quantum combs.
The last equality is due to the fact that the set of quantum comb is closed under transpose.
 Then we present another characterization of $w_{\max}$ of a performance operator $\Omega.$
\begin{lemma}\label{max}\cite{chiribella2016optimal}
	Let $\Omega$ be an operator on $\otimes_{j=1}^N(\mathcal{H}_j^{out}\otimes\mathcal{H}_j^{in})$ and let $w_{max}$ be the maximum of $\tr\Omega C$ over all the elements in the set quantum comb $C$, then $w_{max}$ is given by 
	\begin{align}
	w_{max}=\min\{\lambda\in \mathbb{R}|\lambda\theta\ge \Omega,\theta\in DualComb\},
	\end{align}
	When $\Omega$ is positive, $w_{max} $ can be written as
	\begin{align}
	w_{max}=2^{D_{max}(\Omega||DualComb)}.\label{maxsc}
	\end{align}
\end{lemma}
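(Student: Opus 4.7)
The plan is to read the statement as a strong-duality theorem for a semidefinite program: the primal is $w_{\max}=\max_C \tr(\Omega C)$ subject to $C$ being a quantum comb, which by Lemma \ref{qcb} is a genuine SDP (positivity together with a sequence of affine partial-trace constraints), and the dual variables are precisely elements of $\mathrm{DualComb}$. The two displayed formulas are then respectively the dual optimum and its rewriting via $D_{\max}$.

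First I would establish weak duality. The cornerstone is the pairing identity $\tr(\theta C)=1$ for every quantum comb $C$ and every $\theta\in \mathrm{DualComb}$. Writing $\theta=I_{A_N^{out}}\otimes\Gamma^{(N)}$ and $C=C^{(N)}$ and iteratively applying the comb recursion $\tr_{A_n^{out}} C^{(n)}=I_{A_n^{in}}\otimes C^{(n-1)}$ alongside the dual-comb recursion $\tr_{A_n^{in}}\Gamma^{(n)}=I_{A_{n-1}^{out}}\otimes \Gamma^{(n-1)}$, one telescopes the trace from level $N$ all the way down to $\tr_{A_1^{in}}\Gamma^{(1)}\cdot C^{(0)}=1\cdot 1$. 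Once this is in hand, any feasible pair $(\lambda,\theta)$ with $\lambda\theta\ge\Omega$ satisfies $\tr(\Omega C)\le\lambda\tr(\theta C)=\lambda$ for every quantum comb $C$, yielding $w_{\max}\le\min\{\lambda:\lambda\theta\ge\Omega,\theta\in\mathrm{DualComb}\}$.

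For the reverse inequality I would invoke SDP strong duality. The primal feasible set of quantum combs is a spectrahedron with nonempty relative interior: for instance the Choi operator of the composition of $N$ maximally-depolarizing channels is strictly positive on the relevant support and satisfies the recursion of Lemma \ref{qcb}. Slater's condition therefore holds, so the primal and dual optima coincide and the first displayed equality follows. The second equality is a purely definitional unpacking: by the definition of $D_{\max}$ recalled in Section III, $D_{\max}(\Omega\|\theta)=\log\min\{\lambda:\lambda\theta-\Omega\ge 0\}$, hence minimizing over $\theta\in\mathrm{DualComb}$ and exponentiating gives $2^{D_{\max}(\Omega\|\mathrm{DualComb})}=\min\{\lambda:\lambda\theta\ge\Omega,\theta\in\mathrm{DualComb}\}$, which by the first part equals $w_{\max}$. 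Positivity of $\Omega$ is used only to guarantee that the inner minimum over $\lambda$ is finite for any full-support $\theta$, so that $D_{\max}(\Omega\|\theta)$ is a well-defined real number.

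The main obstacle I expect is setting up the Slater condition so that SDP strong duality transfers cleanly from the finite-dimensional positive-semidefinite cone to the affine slice carved out by the whole chain of partial-trace equalities in Lemma \ref{qcb}; the telescoping pairing identity and the definitional step are routine once this is taken care of. Exhibiting an explicit strictly feasible comb (as above) and appealing to the standard conic duality theorem is the cleanest way to handle it.
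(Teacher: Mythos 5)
The paper does not actually prove this lemma: it is imported verbatim from \cite{chiribella2016optimal}, so there is no in-paper argument to compare against. Your reconstruction is essentially the standard proof from that reference and from the comb formalism of \cite{chiribella2009theoretical}: the pairing identity $\tr(\theta C)=1$ obtained by telescoping the two recursions of Lemmas \ref{qcb} and \ref{qut} gives weak duality, a full-rank comb (the Choi operator of a chain of completely depolarizing channels) gives Slater's condition and hence zero duality gap, and the $D_{\max}$ form is a definitional rewrite, with positivity of $\Omega$ needed only so that the optimal $\lambda$ is nonnegative and the logarithm is defined. The one step you assert rather than establish is the identification of the Lagrangian dual of the comb SDP with the set $\{\lambda\theta:\lambda\ge 0,\ \theta\in \mathrm{DualComb}\}$: strong duality only tells you the primal optimum equals the optimum of whatever the Lagrangian dual turns out to be, and showing that dualizing the chain of affine constraints $\tr_{A_n^{out}}C^{(n)}=I_{A_n^{in}}\otimes C^{(n-1)}$ produces exactly the normalization hierarchy (\ref{qt})--(\ref{qt2}) is the substantive (if routine) computation; it is carried out in the cited reference and should be made explicit rather than folded into the phrase ``the dual variables are precisely elements of $\mathrm{DualComb}$.'' With that computation supplied, your argument is complete and coincides with the source's.
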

\par \indent Next we present a result on the smooth asymptotic version for a performance operator, then we first present a lemma, this result will be used in Theorem $\ref{asy}$.
 \begin{lemma}\label{cdc}
	Assume $P(M_1,M_2)\le \epsilon,$ $M_1,M_2$ are positive operators with the same trace,  then $|w_{\max}(M_1)- w_{\max}(M_2)|\le d_{2n-2}d_{2n-4}\cdots d_{0}\epsilon$, here $d_{2i-2}$ is the dimension of the Hilbert space $A_i^{in}.$ 
\end{lemma}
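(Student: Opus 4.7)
The plan is to exploit the variational characterisation
$w_{\max}(M)=\max_{C\in\mathcal{C}}\tr(MC)$, together with a Jordan decomposition of $M_1-M_2$ and a dimension bound on any quantum comb. First I would fix optimisers $C_1^\star,C_2^\star\in\mathcal{C}$ achieving $w_{\max}(M_1),w_{\max}(M_2)$ respectively, and observe the standard one-sided bound
\begin{align}
w_{\max}(M_1)-w_{\max}(M_2)\le \tr(M_1 C_1^\star)-\tr(M_2 C_1^\star)=\tr\bigl((M_1-M_2)C_1^\star\bigr),
\end{align}
with the symmetric inequality holding after swapping the roles of $M_1,M_2$. So it suffices to control $|\tr((M_1-M_2)C)|$ uniformly over $C\in\mathcal{C}$.

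Next, since $\tr M_1=\tr M_2$, the Jordan decomposition $M_1-M_2=\Delta_+-\Delta_-$ with $\Delta_\pm\ge 0$ on mutually orthogonal supports satisfies $\tr\Delta_+=\tr\Delta_-=\tfrac{1}{2}\|M_1-M_2\|_1$. For any $C\ge 0$,
\begin{align}
\tr\bigl((M_1-M_2)C\bigr)=\tr(\Delta_+ C)-\tr(\Delta_- C)\le \tr(\Delta_+ C)\le \|C\|_{\infty}\,\tr\Delta_+=\tfrac12\|C\|_\infty\,\|M_1-M_2\|_1.
\end{align}
Applying the same estimate with $M_1,M_2$ swapped yields
$|w_{\max}(M_1)-w_{\max}(M_2)|\le \tfrac12\max\{\|C_1^\star\|_\infty,\|C_2^\star\|_\infty\}\,\|M_1-M_2\|_1$.

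It remains to bound $\|C\|_\infty$ uniformly over $C\in\mathcal{C}$ and to convert the purified-distance assumption into a trace-distance estimate. For the first, since $C\ge 0$ we have $\|C\|_\infty\le\tr C$, and using Lemma \ref{qcb} inductively: from $\tr_{A_n^{out}}C^{(n)}=I_{A_n^{in}}\otimes C^{(n-1)}$ one gets $\tr C^{(n)}=d_n^{in}\,\tr C^{(n-1)}$, and since $C^{(0)}=1$ this telescopes to $\tr C=\prod_{i=1}^{n}d_i^{in}=d_{0}d_{2}\cdots d_{2n-2}$. For the second, the Fuchs--van de Graaf inequality (adapted to positive operators of equal trace) gives $\tfrac12\|M_1-M_2\|_1\le P(M_1,M_2)\le\epsilon$. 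Combining the two estimates gives the claimed bound $|w_{\max}(M_1)-w_{\max}(M_2)|\le d_{2n-2}d_{2n-4}\cdots d_0\,\epsilon$.

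The main obstacle I expect is the inductive computation of $\tr C$ for a comb: one has to be careful about which partial traces are applied in which order so that the recursion in Lemma \ref{qcb} really does telescope cleanly through all $N$ layers, and to notice that the factor $\tfrac12$ coming from the Jordan decomposition cancels exactly against the factor $2$ arising when converting purified distance into trace distance, so that no spurious constant appears in the final bound.
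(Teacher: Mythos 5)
Your proof is correct, but it runs along the primal side of the duality, whereas the paper argues on the dual side. The paper takes the optimal dual certificate $\lambda_1\Theta_1\ge M_1$ with $\lambda_1=w_{\max}(M_1)$, notes that $(M_2-M_1)_+\le\epsilon I$ because its trace is $\tfrac12\|M_1-M_2\|_1\le\epsilon$, hence $\lambda_1\Theta_1+\epsilon I\ge M_2$, and then uses that $I/(d_{2n-2}d_{2n-4}\cdots d_0)$ is itself an element of $DualComb$ together with the convex/conic structure of $\{\lambda\theta:\theta\in DualComb\}$ to conclude $w_{\max}(M_2)\le w_{\max}(M_1)+d_{2n-2}\cdots d_0\,\epsilon$; there the dimensional factor enters as the normalisation of the maximally mixed dual comb. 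You instead work from $w_{\max}(M)=\max_{C\in\mathcal{C}}\tr(MC)$, bound $\tr((M_1-M_2)C)\le\|C\|_\infty\tr(M_1-M_2)_+$ via the Jordan decomposition and the equal-trace hypothesis, and obtain the same factor from the telescoped identity $\tr C=\prod_j d_j^{in}$ of Lemma \ref{qcb} combined with $\|C\|_\infty\le\tr C$. Both arguments are sound and yield the identical constant (the operator-norm bound is in fact saturated by identity-channel-type combs, so nothing is lost). Your route avoids having to check convexity of $DualComb$ and that $I/\prod_j d_j^{in}$ satisfies the dual-comb marginal conditions, while the paper's route avoids the H\"older estimate; both share the same (harmless) conversion of $P(M_1,M_2)\le\epsilon$ into $\tfrac12\|M_1-M_2\|_1\le\epsilon$.
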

\begin{proof}
	As $||M_1-M_2||_1\le 2\epsilon,$ and $\tr(M_1-M_2)=0,$ ($\tr(M_1-M_2)_{+}=\tr(M_1-M_2)_{-}$) then $\epsilon\triangle=(M_1-M_2)_{+}\le \epsilon I,$ here $(M_1-M_2)_{+}$ is the positive part of $M_1-M_2,$ $\triangle$ is a bona fide state.
	\par \indent 	Assume $\Theta_1$ is the optimal in terms of $(\ref{maxsc})$ for $M_1$ in terms of $w_{\max}(\cdot),$ then 
	\begin{align}
	\lambda_1\Theta_1+\epsilon I\ge &
	\lambda_1\Theta_1+(M_2-M_1)_{+}\nonumber\\
	\ge& M_1+(M_2-M_1)_{+}\ge M_2,
	\end{align}
	as  $DualComb$ is the set of operators satisfying $(\ref{qt})-(\ref{qt2}),$ and the operations in $Dualcomb$ are linear, then the set of $DualComb$ is convex,  and $I/(d_{2n-2}d_{2n-4}\cdots d_{0})\in DualComb, $ then we have $|w_{\max}(M_2)-w_{\max}(M_1)|\le d_{2n-2}d_{2n-4}\cdots d_{0}\epsilon.$ 
\end{proof}
\begin{theorem}\label{asy}
	Let $\Omega$ be a positive operator on $\otimes_{j=1}^N(\mathcal{H}_j^{out}\otimes\mathcal{H}_j^{in})$, $\lim_{\epsilon\rightarrow0}\lim_{n\rightarrow\infty}\frac{1}{n}\log w_{max}^{\epsilon}(\Omega^{\otimes n})=\log w_{max}(\Omega)$.
	Here
	\begin{align} 
	w_{max}^{\epsilon}(\Omega)=&\sup_{\mbox{\tiny$\begin{array}{c}1/2||\Omega-\Omega^{'}||_1\le \epsilon\\
	\tr\Omega=\tr\Omega^{'}\end{array}$}}w_{max}(\Omega^{'})\nonumber\\=&\sup_{\mbox{\tiny$\begin{array}{c}1/2||\Omega-\Omega^{'}||_1\le \epsilon\\
	\tr\Omega=\tr\Omega^{'}\end{array}$}}\min_{\theta\in DualComb}\min\{\lambda\in \mathbb{R}|\lambda\theta\ge \Omega^{'}\}, 
	\end{align} 
\end{theorem}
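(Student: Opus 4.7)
The plan is to sandwich $w_{\max}^{\epsilon}(\Omega^{\otimes n})$ between $w_{\max}(\Omega)^n$ from below and $w_{\max}(\Omega)^n+D^n\epsilon$ from above (via Lemma \ref{cdc}), then take $\tfrac{1}{n}\log$ and pass to the iterated limit.

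First I would establish the tensor-power identity $w_{\max}(\Omega^{\otimes n})=w_{\max}(\Omega)^n$. The ``$\geq$'' half follows from the primal representation $w_{\max}(\Omega)=\max_{C\in\mathcal{C}}\tr(\Omega C)$: inserting the product comb $C^{\otimes n}$ with $C$ optimal for the single copy yields $\tr(\Omega^{\otimes n}C^{\otimes n})=w_{\max}(\Omega)^n$, after verifying via Lemma \ref{qcb} that tensor products of quantum combs are combs for the $n$-fold parallel comb structure (the marginal recursion factors). The ``$\leq$'' half uses Lemma \ref{max}: pick $\theta\in DualComb$ with $w_{\max}(\Omega)\theta\geq\Omega$; then $\theta^{\otimes n}\in DualComb$ for the $n$-copy system and $w_{\max}(\Omega)^n\theta^{\otimes n}\geq\Omega^{\otimes n}$, so Lemma \ref{max} gives $w_{\max}(\Omega^{\otimes n})\leq w_{\max}(\Omega)^n$.

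For the ``$\geq$'' direction of the theorem, substituting $\Omega'=\Omega^{\otimes n}$ into the supremum defining $w_{\max}^{\epsilon}(\Omega^{\otimes n})$ is an admissible choice, so $w_{\max}^{\epsilon}(\Omega^{\otimes n})\geq w_{\max}(\Omega^{\otimes n})=w_{\max}(\Omega)^n$; dividing by $n$ and taking $n\to\infty$ and then $\epsilon\to 0$ preserves $\geq\log w_{\max}(\Omega)$. For the ``$\leq$'' direction I would apply Lemma \ref{cdc} to the $n$-fold system, whose continuity constant becomes $D^n$ where $D=d_{2N-2}d_{2N-4}\cdots d_0$ is the product of single-copy input dimensions; this yields $w_{\max}^{\epsilon}(\Omega^{\otimes n})\leq w_{\max}(\Omega)^n+D^n\epsilon$, hence
\begin{align*}
\tfrac{1}{n}\log w_{\max}^{\epsilon}(\Omega^{\otimes n})\leq \log w_{\max}(\Omega)+\tfrac{1}{n}\log\bigl(1+(D/w_{\max}(\Omega))^n\epsilon\bigr),
\end{align*}
and one then argues that sending $n\to\infty$ and afterwards $\epsilon\to 0$ kills the second summand.

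The main obstacle is exactly the last step. For fixed $\epsilon$, the estimate $\tfrac{1}{n}\log(w_{\max}(\Omega)^n+D^n\epsilon)\to\log w_{\max}(\Omega)$ goes through cleanly only when $w_{\max}(\Omega)\geq D$; otherwise the ratio $(D/w_{\max}(\Omega))^n$ diverges and one recovers only $\log D$, which the subsequent $\epsilon\to 0$ limit cannot erase. Closing the argument therefore seems to require either a normalization assumption on $\Omega$ forcing $w_{\max}(\Omega)\geq D$, or a sharpening of Lemma \ref{cdc} to a multiplicative continuity estimate built from a dual comb tailored to $\Omega^{\otimes n}$ (instead of the crude $I/D$ used in its proof), so that the additive error takes the scale-compatible form $w_{\max}(\Omega)^n(1+o_n(1))$. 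I expect providing this sharper continuity bound to be the crux of a fully rigorous proof.
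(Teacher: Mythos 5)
Your proposal retraces the paper's own argument almost step for step: the lower bound via the primal characterization $w_{\max}(\Omega)=\max_{C\in\mathcal{C}}\tr(\Omega C)$ and tensorization of the optimal comb, the bound $w_{\max}(\Omega^{\otimes n})\le w_{\max}(\Omega)^n$ via tensorizing the optimal dual comb, and the upper bound on $w_{\max}^{\epsilon}(\Omega^{\otimes n})$ via Lemma \ref{cdc} applied to the $n$-copy system with the constant $D^n$, $D=\prod_j d_j^{in}$. The only place you stop short --- justifying $\lim_{\epsilon\to 0}\lim_{n\to\infty}\frac{1}{n}\log\bigl(w_{\max}(\Omega)^n+D^n\epsilon\bigr)=\log w_{\max}(\Omega)$ --- is exactly the step the paper does not justify either: it displays the same bound and then simply asserts the conclusion. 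So relative to the paper you have lost nothing; you have merely been honest about the weak link.

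Your worry about that link is substantive, and I do not think it can be repaired by sharpening Lemma \ref{cdc} alone. For fixed $\epsilon>0$ the inner limit of $\frac{1}{n}\log\bigl(w_{\max}(\Omega)^n+D^n\epsilon\bigr)$ is $\log\max(w_{\max}(\Omega),D)$, and the outer limit in $\epsilon$ cannot undo this, so the written chain of inequalities only yields the theorem when $w_{\max}(\Omega)\ge D$. Worse, the $\epsilon D^n$ term is not an artifact of the crude dual comb $I/D^n$: since a comb on the $n$-copy space has trace $D^n$ and can be rank one (parallel identity channels), the admissible perturbation $\Omega'=(1-\delta)\Omega^{\otimes n}+\delta\,\tr(\Omega^{\otimes n})\proj{\psi}$ with $\delta\,\tr(\Omega^{\otimes n})=\epsilon$ and $\ket{\psi}$ the top eigenvector of such a comb gives $w_{\max}(\Omega')\gtrsim\epsilon D^n$ by superadditivity of $w_{\max}$ over positive summands. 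Hence for fixed $\epsilon$ the quantity $\frac{1}{n}\log w_{\max}^{\epsilon}(\Omega^{\otimes n})$ really can converge to something strictly larger than $\log w_{\max}(\Omega)$ whenever $w_{\max}(\Omega)<D$, and the statement in this order of limits appears to need an additional normalization hypothesis on $\Omega$ (or a perturbation set that scales with $\tr\Omega^{\otimes n}$) rather than a better continuity estimate. In short: your route is the paper's route, and the gap you flag is a genuine gap in the paper's own proof, not a defect of your reconstruction.
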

 \begin{proof}
 	Here we will show that $\lim\limits_{n\rightarrow\infty}\frac{1}{n}\log w_{max}^{\epsilon}(\Omega^{\otimes n})\ge \log w_{max}(\Omega).$\\
 	\indent From the definition and $0\le \epsilon$, we have 
 	\begin{align}
 	\log w^0_{\max}(\Omega)=&\log w_{\max}(\Omega)
 	\end{align}
 	\begin{align}\label{wmax}
 	w_{max}(\Omega)\nonumber=&\inf\{\lambda\in\mathbb{R}|\lambda\theta\ge \Omega,\theta\in DualComb\}\nonumber\\
 	=&\sup\{\tr\Omega M|M\in \mathcal{C}\},
 	\end{align}
 	Here the first equality is due to the Lemma \ref{max}, the second equality is due to the definition of $w_{max}.$\\
 	\indent	By the definition of comb, it should be closed under the operation tensor, as  if $\Theta\in Comb,$ then $\Theta^{\otimes n}$ satisfies the equality $(\ref{qc'}),$ that is, $\Theta^{\otimes n}\in Comb.$ Next we assume $M$ is the optimal in terms of the second equality $(\ref{wmax})$ of $w_{\max}(\Omega)$, then from the second equality of (\ref{wmax}), we have
 	\begin{align}\label{wmaxn}
 	w_{\max}(\Omega^{\otimes n})\ge& \tr\Omega^{\otimes n}M^{\otimes n}\nonumber\\=&(\tr\Omega M)^n=w_{\max}(\Omega)^n.  
 	\end{align}
 	The first inequality in $(\ref{wmaxn})$ holds is due to that the comb set is closed under the tensor operation. \\
 	Then we have 
 	\begin{align}
 	w_{\max}^{\epsilon}(\Omega^{\otimes n})\ge &w^0_{\max}(\Omega^{\otimes n})\ge w_{\max}(\Omega)^n,
 	\end{align}    \begin{align}
 	\lim_{n\rightarrow\infty}\frac{1}{n}\log w_{max}^{\epsilon}(\Omega^{\otimes n})\ge &\log w_{\max}(\Omega)
 	\end{align}
 	
 	Next we prove 
 	\begin{align}
 	\lim\limits_{\epsilon\rightarrow 0}\lim_{n\rightarrow\infty}\frac{1}{n}\log w_{\max}^{\epsilon}(\Omega^{\otimes n})\le \log w_{\max}(\Omega).\label{wmaxg}
 	\end{align}
 	
 	As \begin{align}\label{max1}
 	w_{\max}(\Omega)=\inf\{\lambda\in\mathbb{R}|\lambda\theta\ge\Omega,\theta\in DualComb\},
 	\end{align}
 	let $\theta$ be the optimal $w_{\max}(\Omega)$ in terms of (\ref{max1}), as $\theta\in DualComb,$ by the equailty (\ref{duc}), $\theta$ can be written as $I\otimes \Gamma,$ then 
 	\begin{align}
 	&(w_{\max}(\Omega) \theta)^{\otimes 2}-\Omega^{\otimes 2}\nonumber\\
 	=&(w_{max}I\otimes\Gamma)^{\otimes 2}-\Omega\otimes(w_{\max} I\otimes\Gamma)\nonumber\\+&\Omega\otimes(w_{\max} I\otimes\Gamma)-\Omega^{\otimes2}\nonumber\\
 	=&(w_{\max}I\otimes\Gamma-\Omega)\otimes(w_{\max}I\otimes\Gamma)\nonumber\\+&\Omega\otimes(w_{\max}I\otimes\Gamma-\Omega)\ge0,
 	\end{align}
 	similarly, we have $(w_{\max}(\Omega) \theta)^{\otimes n}-\Omega^{\otimes n},\forall n,$ that is, $w_{\max}(\Omega^{\otimes n})\le w_{\max}^n(\Omega) ,$ by the Lemma \ref{cdc}, we have 
 	\begin{align}
 	&\frac{1}{n}\log w_{\max}^{\epsilon}(\Omega^{\otimes n})\nonumber\\
 	\le &\frac{1}{n}\log(d_{2n-2}d_{2n-4}\cdots d_{0}\epsilon+w_{\max}(\Omega^{\otimes n})\nonumber\\
 	\le &\frac{1}{n}\log(d_{2n-2}d_{2n-4}\cdots d_{0}\epsilon+w_{\max}^n(\Omega)),
 	\end{align}
 	then we have
 	\begin{align}
 	\lim\limits_{\epsilon\rightarrow 0}\lim_{n\rightarrow\infty}\frac{1}{n}\log w_{\max}^{\epsilon}(\Omega^{\otimes n})\le \log w_{\max}(\Omega).
 	\end{align}
 \end{proof}	 
\section{Conclusion}
\indent  In this article, we have presented a quantity with respect to the smooth max-entropy relative for two quantum combs, we have also presented a relationship between the max-relative entropy and the type \uppercase\expandafter{\romannumeral2} error of quantum hypothesis testing in terms of quantum networks we presented, next we have presented a lower bound of the regularized smooth max-entropy relative to two quantum combs by the result in \cite{brandao2010generalization}. At last, we have shown a smooth asymptotic version of the performance of a quantum positive operator is still the performance of the operator.

\section{Acknowledgements}
The authors thank Mengyao Hu, she gave us help on plotting the type of the network.  Authors were supported by the NNSF of China (Grant No. 11871089), and the Fundamental Research Funds for the Central Universities (Grant Nos. KG12080401 and ZG216S1902).
\bibliographystyle{IEEEtran}
\bibliography{reff}
\end{document}